\title[Comparing the Fairness of Recursively Balanced Picking Sequences]{Comparing the Fairness of \\Recursively Balanced Picking Sequences}
\author{Karen Frilya Celine}
\affiliation{
  \institution{National University of Singapore}
  \country{Singapore}
}
\author{Warut Suksompong}
\affiliation{
  \institution{National University of Singapore}
  \country{Singapore}
}
\author{Sheung Man Yuen}
\affiliation{
  \institution{National University of Singapore}
  \country{Singapore}
}
\begin{abstract}
Picking sequences are well-established methods for allocating indivisible goods. 
Among the various picking sequences, \emph{recursively balanced picking sequences}---whereby each agent picks one good in every round---are notable for guaranteeing allocations that satisfy envy-freeness up to one good.
In this paper, we compare the fairness of different recursively balanced picking sequences using two key measures.
Firstly, we demonstrate that all such sequences have the same price in terms of egalitarian welfare relative to other picking sequences.
Secondly, we characterize the approximate maximin share (MMS) guarantees of these sequences. 
In particular, we show that compensating the agent who picks last in the first round by letting her pick first in every subsequent round yields the best MMS guarantee.
\end{abstract}
\keywords{Picking sequence, Egalitarian welfare, Maximin share, Fair division}
\newcommand{\BibTeX}{\rm B\kern-.05em{\sc i\kern-.025em b}\kern-.08em\TeX}
\newcommand{\EW}{\textup{EW}}
\newcommand{\MMS}{\textup{MMS}}
\newcommand{\roundrobin}{\pi_{\textup{RR}}}
\newcommand{\pickseq}{\Pi_{n, m}}
\newcommandx{\recbalseq}[2][1=n,2=m]{\mathcal{R}_{#1, #2}}
\renewcommand{\emptyset}{\varnothing}
\def\old@comma{,}
     \old@comma\discretionary{}{}{}%
\begin{document}


\pagestyle{fancy}
\fancyhead{}
\pagestyle{plain} 


\maketitle 


\section{Introduction}
\label{sec:intro}

As humans living together and sharing limited resources, we are bound to face the problem of how to fairly divide the resources, also known as \emph{fair division} \citep{Moulin03}. 
An important application of fair division is the allocation of \emph{indivisible goods}, such as allocating course slots to students, assigning sports players to teams, or distributing research equipment among labs \citep{DemkoHi88,BramsEdFi03,AmanatidisAzBi23}.

While numerous methods have been proposed for fairly allocating indivisible goods, one of the most fundamental class of methods is that of \emph{picking sequences} \citep{KohlerCh71,BramsTa00,BouveretLa14}.
Picking sequences allow agents to select goods according to a prespecified agent order---at each turn, the designated agent chooses her favorite good from the remaining goods.
Not only are picking sequences intuitive and easy to implement, but they also help preserve the privacy of the participating agents, as the agents only need to reveal their picks rather than their entire valuations for the goods.
A picking sequence is called \emph{recursively balanced} if at every point in the sequence, the difference between the number of turns taken by each pair of agents is at most one.
Note that a recursively balanced picking sequence can be divided into \emph{rounds}, where each agent picks exactly once in every round (except possibly the last round, in which some agents may not receive a pick).
Recursively balanced picking sequences are notable because they always produce allocations satisfying the fairness notion of \emph{envy-freeness up to one good (EF1)}, which means that if an agent envies another agent, then the envy can be eliminated by removing a good from the latter agent's bundle \citep[p.~5]{AmanatidisAzBi23}.\footnote{This guarantee relies on the assumption that agents have additive valuations over the goods. 
This assumption is common in the fair division literature, and we will make it throughout our paper.}

Among recursively balanced picking sequences, the most widely studied one is \emph{round-robin}, which lets the agents pick in the order $(1, 2, \ldots, n \mid 1, 2, \ldots, n \mid 1, 2, \ldots)$, where $n$ denotes the number of agents.
However, round-robin is far from the only recursively balanced picking sequence.
Another natural recursively balanced sequence is \emph{balanced alternation}, which reverses the ordering in every alternate round: $(1, 2, \ldots, n \mid n, n-1, \ldots, 1 \mid 1, 2, \ldots)$.\footnote{The term \emph{balanced alternation} has also been used to refer to other picking sequences \citep{BramsTa00,BramsIs21}.}
Intuitively, balanced alternation appears fairer than round-robin, as it gives the agents more equal opportunities to choose their preferred goods.
The applicability of balanced alternation is demonstrated by the fact that it is used to allocate courses to students at Harvard Business School \citep{BudishCa12}.
Nevertheless, given that all recursively balanced picking sequences ensure EF1, it is unclear which sequence should be considered the ``fairest''.
Are there recursively balanced picking sequences that fare better with respect to certain fairness criteria than these well-known sequences?

In this paper, we compare the fairness of recursively balanced picking sequences using two established measures: egalitarian welfare and maximin share (MMS).

\subsection{Our Results}
\label{subsec:results}

Assume that there are $n$ agents with additive valuations over $m$ indivisible goods.
If $m < n$, all recursively balanced picking sequences are effectively equivalent, so we assume that $m \ge n$.
Without loss of generality, we consider picking sequences starting with the prefix $(1, 2, \ldots, n)$. 
Our model is described formally in \cref{sec:prelim}.

In \Cref{sec:ew}, we compare recursively balanced picking sequences based on their worst-case egalitarian welfare relative to other picking sequences.
Specifically, we define the \emph{egalitarian price} of a picking sequence as the supremum, taken across all possible instances, of the ratio between the ``optimal egalitarian welfare'' and the egalitarian welfare of the given picking sequence.
We show that this ratio is infinite for any recursively balanced picking sequence if the optimal egalitarian welfare is taken among \emph{all allocations} or \emph{all picking sequences} within the instance, which renders the comparison impractical.
Hence, in \cref{subsec:ew_allpickseq}, we take the optimal egalitarian welfare among allocations obtained via picking sequences with the same prefix $(1, 2, \ldots, n)$. 
In \cref{subsec:ew_recbalseq}, we further restrict the optimal welfare to be across allocations produced by \emph{recursively balanced} picking sequences starting with $(1, 2, \ldots, n)$.
Perhaps surprisingly, for both variants, we find that all recursively balanced picking sequences have the same egalitarian price: $\min\{m-n+1, n\}$ and $\min\{\lceil m/n\rceil, \lfloor \log_2n\rfloor + 1\}$ respectively.
For the latter variant, our proof involves analyzing paths along a directed graph constructed from the agents' picks.

Next, in \cref{sec:mms}, we compare recursively balanced picking sequences using their approximate MMS guarantees.
To state these guarantees, we define conditions for a picking sequence to be \emph{regular}.
Most picking sequences---including all sequences with $m \geq 2n$---are regular, and we characterize their MMS guarantees in \cref{thm:mms_regular}. 
For such sequences, the MMS guarantee depends only on the picks of agent~$n$ (who picks last in the first round).
However, there are a small number of \emph{irregular} picking sequences, for which the agent with the worst MMS guarantee is agent $n-1$ instead of agent~$n$.
The MMS guarantees of these sequences are characterized in \cref{thm:mms_irregular}.
Our characterizations allow us to determine
the best picking sequences with respect to MMS in \cref{thm:mms_best}; these include the sequences that compensate agent~$n$ by letting her pick first in every round after the first.
On the other end of the spectrum, we also identify the picking sequences with the worst MMS guarantee in \cref{thm:mms_worst}. 
While these include the round-robin sequence as one may expect, more interestingly, they also include the balanced alternation sequence whenever $m \geq 3n-1$.

\subsection{Further Related Work}
\label{subsec:related}

As mentioned earlier, picking sequences have long been studied in fair division \citep{KohlerCh71,BramsTa00,BouveretLa11}. 
Since it is sometimes beneficial for an agent to avoid picking her favorite good if she is aware of other agents' preferences, several authors have investigated picking sequences from a strategic perspective \citep{KalinowskiNaWa13b,BouveretLa14,TominagaToYo16,AzizBoLa17}.
Within the class of picking sequences, round-robin has received particular attention due to its simplicity and EF1 fairness guarantee \citep{AmanatidisBiFu24,LiMaSc25}.
\citet{BouveretGiLa25} focused on \emph{constrained serial dictatorships}, where the turns of each agent occur consecutively, e.g., $(1, 2, 2, 3, 3, 3, 3)$.
While such picking sequences are strategyproof, they are generally far from guaranteeing EF1.
\citet{ChakrabortyScSu21} showed that picking sequences provide meaningful fairness guarantees when agents have different entitlements, while \citet{GourvesLeWi21} defined fairness criteria based on the picking sequences themselves.
\citet{AzizFrSh24} attained ``best-of-both-worlds'' fairness via a lottery over picking sequences.

Given the variety of picking sequences, a natural direction is to compare them with respect to particular criteria. 
\citet{BouveretLa11} studied the \emph{expected} utilitarian and egalitarian welfare of picking sequences, and computed the optimal picking sequences for small numbers of agents and goods (see their Table~1).
Subsequently, \citet{KalinowskiNaWa13a} proved that round-robin yields the optimal expected utilitarian welfare when there are two agents, under certain distributions of the agents' utilities (see their Theorem~1).
Unlike these average-case analyses, our results are worst-case and do not rely on any distributional assumptions.
\citet{AzizWaXi15} examined the complexity of possible and necessary allocation problems for different classes of picking sequences. 

Our egalitarian price notion is inspired by the well-established \emph{price of fairness} concept, which captures the ratio between the optimal welfare overall and the optimal welfare subject to a given fairness requirement such as EF1 \citep{CaragiannisKaKa12,BeiLuMa21,CelineDzKo23,LiLiLu24}.
While both egalitarian and utilitarian welfare have been considered in this line of work, we focus on egalitarian welfare in our paper since it is widely regarded as a fairness measure (whereas utilitarian welfare is typically viewed as a measure of efficiency).
\citet[Sec.~6.2]{BaumeisterBoLa17} also studied the price of picking sequences but restricted utilities to follow a specific scoring vector.
Finally, approximate MMS has received significant attention in recent years \citep{KurokawaPrWa18,GhodsiHaSe21,AkramiGa24}.

\section{Preliminaries}
\label{sec:prelim}

Let $N = \{1, \ldots, n\}$ be a set of $n \geq 2$ agents, and $M$ be a set of $m$ goods; we typically denote the goods by $g_1, \ldots, g_m$.
A~\emph{bundle} refers to a (possibly empty) set of goods in~$M$.
Each agent $i\in N$ has a utility function $u_i$ such that $u_i(S)$ is agent~$i$'s utility for the bundle $S \subseteq M$; we write $u_i(g)$ instead of $u_i(\{g\})$ for a single good $g \in M$.
Each utility function is \emph{additive}, i.e., $u_i(S) = \sum_{g \in S} u_i(g)$ for any $S \subseteq M$.
The utilities are \emph{identical} if $u_i = u_j$ for all $i, j \in N$. 
Furthermore, each agent $i$ has a \emph{picking preference order} $\succ_i$, which is a total order on the set of goods~$M$: for any $g, g' \in M$, $g \succ_i g'$ means that agent $i$ would choose good~$g$ before good $g'$, provided both goods are available. 
We assume that $u_i(g) > u_i(g')$ implies $g \succ_i g'$ for any $g, g' \in M$.
We require $\succ_i$ to be a total order to facilitate tie-breaking.\footnote{Alternatively, one could assume that every agent always breaks ties by picking a good with a lower index before one with a higher index if the agent values both goods equally.
All specific instances we use in our proofs satisfy this property.}
An \emph{instance} consists of $N$, $M$, $(u_i)_{i \in N}$, and $(\succ_i)_{i\in N}$.
Denote by $\mathcal{I}_{n, m}$ the set of all instances with $n$ agents and $m$ goods.

Given an instance $\mathcal{I} \in \mathcal{I}_{n, m}$, let $\mathcal{P}$ be the set of all partitions of $M$ into $n$ bundles. The \emph{maximin share (MMS) of agent $i$} is defined by
\begin{align*}
    \MMS_i \coloneqq \max_{\{P_1, \ldots, P_n\} \in \mathcal{P}} \min_{j\in\{1,\dots,n\}} u_i(P_j).
\end{align*}
An allocation $\mathcal{A} = (A_1, \ldots, A_n)$ is an ordered partition of~$M$ into $n$ bundles $A_1, \ldots, A_n$ such that $A_i$ is allocated to agent $i \in N$.
An allocation~$\mathcal{A}$ is called \emph{envy-free up to one good (EF1)} if for any $i, j \in N$ with $A_j \neq \emptyset$, there exists $g \in A_j$ such that $u_i(A_i) \geq u_i(A_j \setminus \{g\})$.
The \emph{egalitarian welfare} of allocation $\mathcal{A}$ for instance $\mathcal{I}$ is defined to be $\EW(\mathcal{A}, \mathcal{I}) \coloneqq \min_{i \in N} u_i(A_i)$.

A \emph{picking sequence} is a sequence $\pi = (a_1, \ldots, a_m)$, where $a_j \in N$ for each $j \in \{1, \ldots, m\}$.
For each agent $i \in N$, \emph{the picking sequence of agent $i$ in $\pi$} is defined as $\pi_i = (t_1, \ldots, t_R)$, where $a_t = i$ if and only if $t \in \{t_1, \ldots, t_R\}$, and $t_1 < \dots < t_R$.
For each $r \in \{1, \ldots, R\}$, we call $t_r$ \emph{the index of agent $i$'s $r$-th pick in $\pi$}.

A picking sequence $\pi$ is \emph{recursively balanced} if for every prefix of~$\pi$ and every pair of agents, the difference in the number of times that the two agents appear in the prefix is at most~$1$.
Note that for each $j \in \{1, \ldots, \lfloor m/n \rfloor\}$, the prefix of length $jn$ of a recursively balanced picking sequence $\pi$ contains each agent exactly $j$ times. Hence, the subsequence $(a_{(j-1)n+1}, \ldots, a_{jn})$ contains each agent exactly once; we call this subsequence \emph{the $j$-th round} of $\pi$.
When $m$ is not divisible by $n$, \emph{the $\lceil m/n \rceil$-th round} of $\pi$ is the subsequence $(a_{\lfloor m/n \rfloor n+1}, \ldots, a_m)$, which contains each agent at most once.
For clarity, we may use a vertical bar ($\mid$) instead of a comma ($,$) to separate the sequences of different rounds. 
That is, given a picking sequence $\pi$, we may write $\pi = (a_1, \ldots, a_n \mid a_{n+1}, \ldots, a_{2n} \mid \cdots \mid a_{\lfloor m/n \rfloor n + 1}, \ldots, a_m)$.
We sometimes use the term \emph{round} to denote the same picks for picking sequences that are not recursively balanced---for such sequences, it is not necessary that every agent picks at most once in each round.
The \emph{round-robin sequence} is the recursively balanced picking sequence
\[
    \roundrobin \coloneqq (1, 2, \ldots, n \mid \cdots \mid 1, 2, \ldots, n \mid 1, 2, \ldots, a_m),
\]
where $a_m = m - (\lceil m/n \rceil - 1)n$.

We denote by $\pickseq$ the set of all picking sequences with $n$ agents and $m$ goods prefixed by $(1, 2, \ldots, n)$.
Moreover, we denote by $\recbalseq$ the set of all \emph{recursively balanced} picking sequences with $n$ agents and $m$ goods prefixed by $(1, 2, \ldots, n)$.
Note that $\pickseq$ (resp.~$\recbalseq$) does not contain all picking sequences (resp.~recursively balanced picking sequences). 
However, for any picking sequence (resp.~recursively balanced picking sequence) $\pi$ with all agents in the first $n$ picks, there exists a picking sequence $\pi'$ in $\pickseq$ (resp.~$\recbalseq$) that is equivalent to $\pi$ up to some relabelling of the agents.

Given an instance and a picking sequence $\pi = (a_1, \ldots, a_m)$, the allocation $\mathcal{A}^\pi = (A_1^\pi, \ldots, A_n^\pi)$ obtained with $\pi$ is given as follows: each agent begins with an empty bundle, and at each step $j \in \{1, \ldots, m\}$, agent $a_j$ selects the $\succ_{a_j}$-maximum good available among the unallocated goods and adds this good to her bundle.
The \emph{round-robin allocation} is the allocation obtained with $\roundrobin$.
Given a picking sequence $\pi$ and an instance $\mathcal{I}$, we write $\EW(\pi, \mathcal{I}) \coloneqq \EW(\mathcal{A}^\pi, \mathcal{I})$ to mean the egalitarian welfare of the allocation obtained by $\pi$.

We make the following observation.

\begin{restatable}{proposition}{propEfone}
\label{prop:EF1}
A picking sequence always produces an EF1 allocation if and only if it is recursively balanced.
\end{restatable}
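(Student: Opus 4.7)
My plan is to establish the biconditional by proving each direction independently. For the forward direction (recursive balance implies EF1), I will use a pairing argument. Let $c_a(t)$ denote the number of times agent $a$ appears in the first $t$ entries of $\pi$, so that recursive balance amounts to $|c_i(t) - c_j(t)| \leq 1$ for every prefix length $t$ and every pair $i, j$. Fix any two agents $i, j$, let $q \coloneqq c_j(m)$, and write $g^a_r$ for the good agent $a$ selects at her $r$-th turn. I will pair $j$'s $r$-th pick (for $r \geq 2$) with $i$'s $(r-1)$-th pick. The crucial structural fact is that recursive balance forces $c_i(t) \geq c_j(t) - 1$ at every prefix, so $i$ takes $g^i_{r-1}$ strictly before $j$ takes $g^j_r$. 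Since $g^j_r$ is selected only once in the entire sequence and by no other agent, it must still be available when $i$ chooses her $(r-1)$-th good; consequently, $u_i(g^i_{r-1}) \geq u_i(g^j_r)$. Summing over $r = 2, \ldots, q$ and using $c_i(m) \geq q - 1$ yields $u_i(A_i) \geq u_i(A_j \setminus \{g^j_1\})$, which establishes EF1 (removing $j$'s first pick).

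For the reverse direction, I will argue by contrapositive, exhibiting a specific instance whose allocation fails EF1 whenever $\pi$ is not recursively balanced. Since each pick changes only one counter by one, there is a smallest step $t$ at which the balance first fails; at this step, two agents $i, j$ satisfy $c_j(t) = k + 2$ and $c_i(t) = k$ for some $k \geq 0$. The instance will feature exactly $2k + 2$ \emph{valuable} goods, worth $1$ to each of $i$ and $j$ and worth $0$ to them otherwise, while every other agent is assigned utilities (and consistent preferences) that strictly favor the remaining \emph{useless} goods. By a simple counting argument, the non-$i$, non-$j$ agents always find a useless good available when it is their turn, so they never touch a valuable good; consequently, $i$ and $j$ consume all $2k + 2$ valuable goods between them, with $j$ capturing $k + 2$ of them and $i$ only $k$ by the end of step $t$. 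After step $t$ no valuable goods remain, so $i$ and $j$ can only pick useless goods thereafter. The final allocation satisfies $u_i(A_i) = k$ and $u_i(A_j) = k + 2$; since every good in $A_j$ is worth at most $1$ to $i$, I obtain $u_i(A_j \setminus \{g\}) \geq k + 1 > u_i(A_i)$ for every $g \in A_j$, violating EF1.

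The main hurdle will be the verification step in the reverse direction: I need to show that whenever $i$ or $j$ picks within the first $t$ steps a valuable good is still available, and that whenever any other agent picks at any time a useless good is still available. Both are short counting arguments relying on $c_i(m) + c_j(m) \geq 2k + 2$, which bounds the total number of useless-preferring picks in the sequence by $m - (2k + 2)$, matching the supply of useless goods. These arguments are independent of the detailed structure of $\pi$ beyond the imbalance at step $t$.
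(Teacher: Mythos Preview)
Your proof is correct. Both directions go through, and the counting argument you sketch for the reverse direction does close: the key point (which you should make explicit) is that if at some non-$i$, non-$j$ pick no useless good remained, then since $i$ and $j$ never touch a useless good while a valuable one survives, all $m-(2k+2)$ useless goods would already have been consumed by \emph{earlier} non-$i$, non-$j$ picks---contradicting the bound $m - c_i(m) - c_j(m) \le m-(2k+2)$ on the \emph{total} number of such picks.

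Compared with the paper, your forward direction is the standard pairing argument the paper merely cites, so you supply strictly more detail there. For the reverse direction your route differs. The paper takes the shortest prefix $\rho$ of length $\ell$ witnessing the imbalance and sets $u_i(g_s)=u_j(g_s)=1$ for $s\le\ell$ (and $0$ otherwise) while every other agent values \emph{all} goods equally; with index tie-breaking every agent's preference order is simply $g_1\succ g_2\succ\cdots\succ g_m$, so good $g_s$ is picked at step $s$ regardless of who picks. The analysis is then immediate: $j$ collects at least two more of $g_1,\dots,g_\ell$ than $i$, and EF1 fails. Your construction instead steers the non-$i$, non-$j$ agents toward a disjoint pool of ``useless'' goods, which works but costs you the additional availability verification; the paper's trick of giving the other agents uniform utilities avoids that verification entirely and is worth noting as a simplification.
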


\begin{proof}
The backward direction has been shown by \citet[p.~5]{AmanatidisAzBi23}.
For the forward direction, consider a picking sequence~$\pi$ which is not recursively balanced.
This means that there exists a prefix $\rho$ of $\pi$ and a pair of agents $i, j$ such that agent $j$ picks at least two more times than agent $i$ in $\rho$.
Let $\ell$ be the length of the prefix $\rho$.
Consider an instance where agents $i$ and $j$ value each of $g_1,\dots,g_\ell$ at~$1$ and the remaining goods at $0$, all other agents value each good at~$1$, and agents break ties in favor of lower-index goods.
Then, goods $g_1$ to $g_m$ are chosen in ascending order.
By the definition of~$\rho$, agent $j$ receives at least two more goods of value~$1$ than agent~$i$. 
Hence, agent $i$ envies agent $j$ by more than one good, and the allocation is not EF1.
\end{proof}

\section{Egalitarian Price}
\label{sec:ew}

In this section, we compare recursively balanced picking sequences using the egalitarian welfare.
We first observe that if the utilities are identical, then round-robin is always the worst among such sequences.

\begin{restatable}{proposition}{propEwIdenWorstRr}
\label{prop:ew_iden_worst_rr}
Let $\mathcal{I}$ be an instance with $n$ agents with identical utilities and $m$ goods, and let $\pi \in \recbalseq$.
Then, the egalitarian welfare of the allocation obtained with $\pi$ is at least the corresponding welfare with $\roundrobin$.
\end{restatable}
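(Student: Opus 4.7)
The plan is to exploit the fact that, with identical utilities, the sequence of utility values realized by a picking sequence depends only on the length of the sequence, not on its order. I first relabel so that $u(g_1) \geq u(g_2) \geq \cdots \geq u(g_m)$, where $u \coloneqq u_1 = \cdots = u_n$. Under any picking sequence $\pi = (a_1, \ldots, a_m)$, the picker at step $j$ always takes a remaining good of maximum utility; since all agents agree on utility, the pick at step $j$ has value exactly $u(g_j)$, regardless of tie-breaking. Consequently, agent $i$'s bundle value under $\pi$ equals $\sum_{j : a_j = i} u(g_j)$, a quantity determined solely by the positions at which agent $i$ picks.

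Next, I would identify $\EW(\roundrobin, \mathcal{I})$ explicitly. Under $\roundrobin$, agent $i$ picks at positions $i, n+i, 2n+i, \ldots$ (truncated by $m$), so her bundle value is $\sum_{k \ge 0,\, kn+i \le m} u(g_{kn+i})$. A term-by-term comparison shows that this quantity is minimized at $i = n$, yielding $\EW(\roundrobin, \mathcal{I}) = \sum_{j=1}^{\lfloor m/n \rfloor} u(g_{jn})$.

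For the main inequality, let $\pi \in \recbalseq$. Since $\pi$ is recursively balanced with prefix $(1, \ldots, n)$, every agent picks exactly once in each of the first $\lfloor m/n \rfloor$ rounds. Agent $i$'s pick in round $j$ lies at some position in $\{(j-1)n+1, \ldots, jn\}$, so by the first observation it has value at least $u(g_{jn})$. Summing over $j = 1, \ldots, \lfloor m/n \rfloor$ and noting that any additional pick in round $\lceil m/n \rceil$ contributes nonnegatively, agent $i$'s bundle value is at least $\sum_{j=1}^{\lfloor m/n \rfloor} u(g_{jn}) = \EW(\roundrobin, \mathcal{I})$. Taking the minimum over $i$ gives the claim.

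The argument is largely routine; the only subtlety---rather than a real obstacle---is verifying that agent $n$ indeed minimizes under round-robin when $n \nmid m$. Writing $m = kn + r$ with $0 < r < n$, the first $r$ agents each receive one extra good, and one checks via the pairing $u(g_{(k'-1)n+r}) \geq u(g_{k'n})$ for $k' = 1, \ldots, k$ that agent $r$'s bundle dominates agent $n$'s termwise (plus one extra nonnegative term $u(g_{kn+r})$), so agent $n$ remains the minimizer.
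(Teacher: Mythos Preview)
Your proposal is correct and follows essentially the same argument as the paper. One small simplification: the paper avoids your final-paragraph verification that agent~$n$ is the minimizer under round-robin by observing only that agent~$n$'s value equals $\sum_{j=1}^{\lfloor m/n \rfloor} u(g_{jn})$, whence $\EW(\roundrobin,\mathcal{I}) \le \sum_{j=1}^{\lfloor m/n \rfloor} u(g_{jn})$, which is all the inequality requires.
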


\begin{proof}
Let $u$ denote the common utility function and assume, without loss of generality, that $u(g_1) \geq u(g_2) \geq \ldots \geq u(g_m)$.
Given any picking sequence $\pi = (a_1, a_2, \ldots, a_m) \in \recbalseq$, each agent $a_j$ must pick a good of value $u(g_j)$. (Note that depending on the agents' tie-breaking mechanisms, the exact good chosen by agent $a_j$ may not be $g_j$.)
In each round $r \in \{1, \ldots, \lfloor m/n \rfloor\}$, each agent $i \in N$ can select some good of value at least $u(g_{rn})$.
Hence, agent $i$ will get a total utility of at least $\sum_{r=1}^{\lfloor m/n \rfloor} u(g_{rn})$.
On the other hand, under the round-robin sequence $\roundrobin$, agent $n$ gets a total utility of exactly $\sum_{r=1}^{\lfloor m/n \rfloor} u(g_{rn})$.
It follows that the egalitarian welfare of the allocation obtained with $\pi$ is at least the egalitarian welfare of the allocation obtained with the round-robin sequence $\roundrobin$.    
\end{proof}

Interestingly, \cref{prop:ew_iden_worst_rr} ceases to hold for non-identical utilities, even with identical orders of preference.

\begin{example}
Consider the following instance.
\begin{center} 
    \begin{tabular}{c|cccc}
        $g$      & $g_1$ & $g_2$ & $g_3$ & $g_4$\\
        \hline
        $u_1(g)$ & $8$ & $7$ & $5$ & $0$ \\
        $u_2(g)$ & $7$ & $6$ & $4$ & $3$
    \end{tabular}
\end{center}
The round-robin allocation is $(\{g_1, g_3\}, \{g_2, g_4\})$, which has an egalitarian welfare of $9$.
However, the allocation obtained with $\pi = (1, 2, 2, 1)$ is $(\{g_1, g_4\}, \{g_2, g_3\})$, which has an egalitarian welfare of only~$8$.
\end{example}

In order to compare different sequences, a natural approach is to consider their egalitarian welfare in the worst case.
However, this approach is not meaningful because for any sequence, this value can be $0$ even when the optimal egalitarian welfare is positive.

\begin{example}
\label{ex:ew}
Given any $n \geq 2$, $m \geq n$, and $\pi \in \recbalseq$, consider an instance with the following utilities:
\begin{itemize}
    \item $u_1(g_1) = 2$, $u_1(g_2) = 1$, and $u_2(g_1) = 3$.
    \item For each agent $i \in N \setminus \{1, 2\}$, we have $u_i(g_i) = 3$.
    \item $u_i(g_j) = 0$ for all other pairs $(i, j)$.
\end{itemize}
Agents break ties in favor of lower-index goods.

Observe that according to $\pi$, the goods will be chosen in increasing order of index, i.e., each agent $i \in N$ picks good~$g_i$ in her first turn, and picks a good with value $0$ in any subsequent turn.
Hence, agent $1$ receives utility $2$, agent~$2$ receives utility $0$, and each agent $i \in N \setminus \{1, 2\}$ receives utility $3$.
Therefore, the egalitarian welfare obtained with~$\pi$ is $0$.
On the other hand, if agents $1$ and $2$ swap $g_1$ and $g_2$, the egalitarian welfare becomes~$1$.
Moreover, this allocation can be obtained with the picking sequence $\pi'$ derived from~$\pi$ by switching the first two picks (between agents $1$ and $2$).
\end{example}

\Cref{ex:ew} implies that the conventional definition of the \emph{egalitarian price} of $\pi \in \recbalseq$, i.e.,
\begin{align*}
    \sup_{\mathcal{I} \in \mathcal{I}_{n, m}} \max_{\mathcal{A}} \frac{\EW(\mathcal{A}, \mathcal{I})}{\EW(\pi, \mathcal{I})},
\end{align*}
where the maximum is taken across all allocations~$\mathcal{A}$ in the instance~$\mathcal{I}$ \citep{AumannDo15,Suksompong19,CelineDzKo23}, is not useful as a metric for comparison.\footnote{For such fractions, we interpret $\frac{0}{0}$ to be equal to $1$.}
Indeed, with this definition, the egalitarian price of any picking sequence $\pi \in \recbalseq$ would be $\infty$ for any $n \geq 2$ and $m \geq n$.
Moreover, the picking sequence~$\pi'$ in \Cref{ex:ew} yields positive egalitarian welfare.\footnote{In fact, \citet{CelineDzKo23} showed that for every instance, there always exists a recursively balanced picking sequence with the same agent ordering in every round, such that the resulting allocation has an egalitarian welfare of at least $1/(2n-1)$ times the optimum.
However, the ordering of agents in each round of such a picking sequence is not fixed for all instances, but may differ depending on the instance.
}
Hence, this also rules out the following definition of the egalitarian price of $\pi \in \recbalseq$ with respect to all other picking sequences:
\begin{align*}
    \sup_{\mathcal{I} \in \mathcal{I}_{n, m}} \max_{\pi'} \frac{\EW(\pi',\mathcal{I})}{\EW(\pi,\mathcal{I})}.
\end{align*}

In a bid to find a meaningful comparison metric, we observe that the picking sequence $\pi'$ in \cref{ex:ew} does not belong to $\pickseq$ or $\recbalseq$, since it does not start with the prefix $(1, 2, \ldots, n)$.
As it turns out, defining the optimal welfare as the maximum welfare over all picking sequences $\pi' \in \pickseq$ or $\pi' \in \recbalseq$ allows us to circumvent the issue that the welfare provided by a given picking sequence~$\pi$ can be~$0$ even when the optimal welfare is positive.
In \cref{subsec:ew_allpickseq}, we consider the egalitarian price of $\pi \in \recbalseq$ with respect to all picking sequences $\pi' \in \pickseq$. In \cref{subsec:ew_recbalseq}, we restrict the search of the optimal egalitarian welfare to recursively balanced picking sequences, and consider the egalitarian price of $\pi \in \recbalseq$ with respect to all recursively balanced picking sequences $\pi' \in \recbalseq$.
Interestingly, we show that for each metric, the egalitarian price of every recursively balanced picking sequence $\pi$ is the same.
This means that all recursively balanced picking sequences are ``equally fair'' with respect to these metrics based on egalitarian welfare.

\subsection{Price Relative to All Picking Sequences}
\label{subsec:ew_allpickseq}

First, we derive a tight bound for the egalitarian price relative to other picking sequences starting with the same $n$ picks, as stated in \cref{thm:ew_allpickseq} below.
Note that the egalitarian price depends only on $n$ and $m$, and not on the specific picking sequence $\pi$.
Therefore, every picking sequence is equally fair with respect to this version of the egalitarian price.

\begin{theorem}
\label{thm:ew_allpickseq}
For any $n \geq 2$, $m \geq n$, and $\pi \in \recbalseq$,
\begin{align*}
    \sup_{\mathcal{I} \in \mathcal{I}_{n, m}} \max_{\pi' \in \pickseq} \frac{\EW(\pi',\mathcal{I})}{\EW(\pi,\mathcal{I})} &= \min\{m-n+1, n\}.
\end{align*}
\end{theorem}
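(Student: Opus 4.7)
I plan to prove the upper and lower bounds on $\min\{m - n + 1, n\}$ separately.

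For the $(m - n + 1)$-bound in the upper direction, fix $\pi \in \recbalseq$, $\pi' \in \pickseq$, and an instance $\mathcal{I}$; set $v = \EW(\pi, \mathcal{I})$ and pick $i^* \in N$ with $u_{i^*}(A_{i^*}^\pi) = v$. Because $\pi$ and $\pi'$ share the prefix $(1, 2, \ldots, n)$, agent $i^*$ takes the same first good $g^*$ under both sequences. Since $g^* \in A_{i^*}^\pi$, we have $u_{i^*}(g^*) \leq v$, and because agent $i^*$ always grabs her favorite from a shrinking pool, every good in $A_{i^*}^{\pi'}$ is valued at most $u_{i^*}(g^*) \leq v$ by her. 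Under $\pi'$ she picks once in the fixed prefix and at most $m - n$ additional times, so a direct count gives $\EW(\pi') \leq u_{i^*}(A_{i^*}^{\pi'}) \leq (m - n + 1)\,v$.

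For the $n$-bound, the above count is tight only when $|A_{i^*}^{\pi'}| \leq n$, so when $|A_{i^*}^{\pi'}| > n$ I plan to switch witness to a different agent. The key additional ingredient is the EF1 property of $\mathcal{A}^\pi$: for each $j$, there exists $g \in A_j^\pi$ with $u_{i^*}(A_j^\pi \setminus \{g\}) \leq v$, which rearranges to $u_{i^*}(A_j^\pi) \leq v + \max_{g \in A_j^\pi} u_{i^*}(g)$. Aggregating these inequalities controls how much value $i^*$ can assign to the other bundles and, combined with a pigeonhole over post-prefix pick counts in $\pi'$ (forced once $i^*$ monopolizes more than $n$ picks), should identify a witness agent $j$ whose total $\pi'$-utility is bounded by $n v$ via the counting estimate $u_j(A_j^{\pi'}) \leq |A_j^{\pi'}| \cdot u_j(g_j^*)$ applied to $j$.

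For the lower bound, I would construct, for each $\pi \in \recbalseq$, an instance realizing the ratio $k \coloneqq \min\{m - n + 1, n\}$. The construction designates one agent---agent $n$ in the round-robin case, and more generally an agent whose turns after round $1$ come latest in their respective rounds of $\pi$---as the bottleneck. Each of the other $n - 1$ agents receives a unique high-value favorite plus carefully chosen tie-breaking preferences rendering her effectively indifferent among the remaining goods. The bottleneck agent is given value $1$ on exactly $k$ specific goods (chosen so that in each post-round-$1$ turn of $\pi$ occurring before hers in that round, the other agents' tie-breaking diverts one of her valued goods away) and value $0$ elsewhere. Under $\pi$ she then receives only her round-$1$ pick of value $1$, giving $\EW(\pi) = 1$; under $\pi' = (1, 2, \ldots, n, n, \ldots, n)$ she claims all $k$ of her valued goods while the other agents retain their favorites, so $\EW(\pi') = k$.

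The main obstacle will be the $n$-bound in the upper-bound argument: the simple ``number-of-picks times first-pick value'' estimate saturates at $m - n + 1$, so a genuine change of witness agent is needed once $|A_{i^*}^{\pi'}| > n$, and translating the EF1 inequality for $\pi$ into a tight upper bound on another agent's first-pick value requires careful aggregation. The lower bound is a secondary challenge, as the specific ``valued'' goods for the bottleneck agent must be chosen adaptively to the structure of an arbitrary $\pi \in \recbalseq$, not just to round-robin.
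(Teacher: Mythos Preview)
Your treatment of the $(m-n+1)$ upper bound is fine and matches the paper's argument (indeed, it suffices to apply it to the single agent $i^*$ realizing $\EW(\pi,\mathcal{I})$). The serious gap is in your plan for the $n$ upper bound. Your proposed ``switch to a different witness agent $j$'' cannot work: the EF1 inequality you invoke, $u_{i^*}(A_j^\pi) \le v + \max_{g\in A_j^\pi} u_{i^*}(g)$, is a statement about agent $i^*$'s valuation of $j$'s bundle and says nothing whatsoever about $u_j(\cdot)$. And the estimate $u_j(A_j^{\pi'}) \le |A_j^{\pi'}|\cdot u_j(g_j^*)$ is useless here because $u_j(g_j^*)$ is not bounded in terms of $v = u_{i^*}(A_{i^*}^\pi)$ --- since $v$ is the \emph{minimum} of the $u_k(A_k^\pi)$, you only know $u_j(g_j^*)\le u_j(A_j^\pi)$ with $u_j(A_j^\pi)\ge v$, the wrong direction. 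No amount of pigeonholing on $|A_j^{\pi'}|$ rescues this.

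The paper never switches agents. Instead it sharpens the \emph{lower} bound on $u_{i^*}(A_{i^*}^\pi)$: because $\pi$ is recursively balanced, in each round $r\ge 2$ agent $i^*$ still finds one of her top $rn$ goods available, so if the post-round-$1$ goods are relabelled $g_{n+1},\dots,g_m$ in her preference order, she obtains at least $u_{i^*}(g_{i^*}) + \sum_{r=2}^{\lfloor m/n\rfloor} u_{i^*}(g_{rn})$. Meanwhile $u_{i^*}(A_{i^*}^{\pi'}) \le u_{i^*}(g_{i^*}) + \sum_{j>n} u_{i^*}(g_j)$, and grouping that sum into blocks of length $n$ (each block dominated by $n$ copies of its leading term) yields $u_{i^*}(A_{i^*}^{\pi'}) \le n\cdot u_{i^*}(A_{i^*}^\pi)$. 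This is the missing idea.

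For the lower bound your plan is in the right spirit but more complicated than necessary. You do not need to adapt the set of ``valued'' goods to the fine structure of $\pi$; it is enough to choose the bottleneck \emph{agent} well. When $m\le 2n-1$, take any agent $k$ who does not appear in round~$2$; when $m\ge 2n$, take $k=a_{2n}$, the agent who picks last in round~$2$. In either case give $k$ equal value on $g_k$ together with $g_{n+1},\dots,g_{\min(2n-1,m)}$ and give every other agent value only on her own $g_i$ (with lexicographic tie-breaking). Then under $\pi$ agent $k$ secures exactly one valued good, while under $\pi'=(1,\dots,n,k,k,\dots,k)\in\Pi_{n,m}$ she collects all $\min\{m-n+1,n\}$ of them. (Note your written $\pi'=(1,\dots,n,n,\dots,n)$ only matches when $k=n$.)
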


\begin{proof}
We begin by establishing the upper bound.
Let $\pi \in \recbalseq$, $\mathcal{I} \in \mathcal{I}_{n,m}$, and $\pi' \in \pickseq$.

First, we show that when $n \leq m \leq 2n-1$, an upper bound is $\min\{m-n+1, n\} = m-n+1$.
Fix $k \in N$, and let $g$ be the good picked by agent $k$ in the first round of $\pi$.
Note that $g$ is also picked by agent $k$ in the first round of $\pi'$, since both picking sequences have the same prefix for the first round.
Agent $k$ receives a utility of at least $u_k(g)$ in the allocation obtained by $\pi$.
In $\pi'$, agent $k$ receives at most $1+(m-n)$ goods, and the utility of each good is at most $u_k(g)$, so agent $k$ receives a utility of at most $(m-n+1)\cdot u_k(g)$.
Therefore, the ratio of agent $k$'s utility in the allocation obtained by $\pi'$ to the corresponding utility for $\pi$ is at most $m-n+1$.
Since $k \in N$ was arbitrarily chosen, we have
\begin{align*}
    \frac{\EW(\pi',\mathcal{I})}{\EW(\pi,\mathcal{I})} \leq m-n+1,
\end{align*}
proving the upper bound when $n \leq m \leq 2n-1$.

Next, we show that when $m \geq 2n$, an upper bound is $\min\{m-n+1, n\} = n$.
Fix $k \in N$.
Again, the goods picked by the agents in the first round of $\pi$ are identical to that in the first round of $\pi'$.
Without loss of generality, relabel the goods as follows: let $g_i$ be the good picked by agent $i \in N$ in the first round of $\pi$ and~$\pi'$, and let the goods in $M' =\{g_{n+1}, \ldots, g_m\}$ be arranged in descending order of agent $k$'s picking preference, i.e., for all $n+1 \leq j_1, j_2 \leq m$, we have $j_1 < j_2$ if and only if $g_{j_1} \succ_k g_{j_2}$.

Consider the allocation obtained by $\pi$.
For every $r \in \{2, \ldots, \lfloor m/n \rfloor \}$, agent $k$ is able to select some good from $\{g_{n+1}, \ldots, g_{rn}\}$ in round $r$.
This is because $\pi$ is recursively balanced, so when it is agent $k$'s turn in round $r$, at most $rn-1$ goods (including those in $M \setminus M'$) have been taken.
This means that the most preferred good for agent $k$ in round~$r$ is no worse than $g_{rn}$.
Agent $k$'s utility of her bundle from $\pi$ is therefore at least $u_k(g_k) + \sum_{r=2}^{\lfloor m/n \rfloor} u_k(g_{rn})$.

Now, consider the allocation obtained by $\pi'$.
Agent $k$ receives a subset of $\{g_k\} \cup M'$, which has utility at most
\begin{align*}
    &u_k(g_k) + u_k(M') \\
    &= u_k(g_k) + \sum_{j=n+1}^{2n-1} u_k(g_j) + \sum_{r=2}^{\lfloor m/n \rfloor-1} \sum_{j=0}^{n-1} u_k(g_{rn+j}) \\
    &\qquad+ \sum_{j=\lfloor m/n \rfloor n}^m u_k(g_j) \\
    &\leq u_k(g_k) + \sum_{j=n+1}^{2n-1} u_k(g_k) + \sum_{r=2}^{\lfloor m/n \rfloor-1} \sum_{j=0}^{n-1} u_k(g_{rn}) \\
    &\qquad+ \sum_{j=\lfloor m/n \rfloor n}^m u_k(g_{\lfloor m/n \rfloor n}) \\
    &\leq n \cdot u_k(g_k) + \sum_{r=2}^{\lfloor m/n \rfloor-1} n \cdot u_k(g_{rn}) + n \cdot u_k(g_{\lfloor m/n \rfloor n}) \\
    &= n \cdot \left( u_k(g_k) + \sum_{r=2}^{\lfloor m/n \rfloor} u_k(g_{rn}) \right),
\end{align*}
which is at most $n$ multiplied by agent $k$'s utility of her bundle from~$\pi$.
Since $k \in N$ was arbitrarily chosen, we have
\begin{align*}
    \frac{\EW(\pi',\mathcal{I})}{\EW(\pi,\mathcal{I})} \leq n,
\end{align*}
proving the upper bound when $m \geq 2n$.

We next turn to the lower bound.
First, we show that when $n \leq m \leq 2n-1$, a lower bound is $\min\{m-n+1, n\} = m-n+1$.
Let $\pi \in \recbalseq$, and let $k \in N$ be an agent who does not get to pick in the second round---such an agent must exist since $m \leq 2n-1$ and so the second round, if it exists, is incomplete.
Consider an instance~$\mathcal{I} \in \mathcal{I}_{n,m}$ with the following utilities:
\begin{itemize}
    \item $u_k(g_j) = 1/(m-n+1)$ for $j \in \{k\} \cup \{n+1, \ldots, m\}$.
    \item For each agent $i \in N \setminus \{k\}$, we have $u_i(g_i) = 1$.
    \item $u_i(g_j) = 0$ for all other pairs $(i, j)$.
\end{itemize}
Agents break ties in favor of lower-index goods.

Consider the allocation obtained by $\pi$.
In the first round, every agent $i \in N$ selects $g_i$.
In the second round, agent~$k$ does not select any good.
Agent $k$ receives a utility of $1/(m-n+1)$ from $g_k$, and every other agent $i$ receives a utility of $1$ from $g_i$, so the egalitarian welfare is $1/(m-n+1)$.

Now, consider the allocation obtained by $\pi' = (1, 2, \ldots, n, k, k, \ldots, k) \in \pickseq$.
In the first round, every agent $i \in N$ selects $g_i$.
In the second round, agent $k$ selects all of $g_{n+1}, \ldots, g_m$.
Agent $k$ receives a utility of $1$ from picking all $m-n+1$ goods valuable to her, and every other agent~$i$ receives a utility of $1$ from $g_i$, so the egalitarian welfare is~$1$.
This shows that
\begin{align*}
    \frac{\EW(\pi',\mathcal{I})}{\EW(\pi,\mathcal{I})} &= \frac{1}{1/(m-n+1)} = m-n+1,
\end{align*}
proving the lower bound when $n \leq m \leq 2n-1$.

Next, we show that when $m \geq 2n$, a lower bound is $\min\{m-n+1, n\} = n$.
Let $\pi \in \recbalseq$, and let $k = a_{2n}$, i.e., the agent who gets the $(2n)$-th pick in~$\pi$.
Consider an instance $\mathcal{I} \in \mathcal{I}_{n,m}$ with the following utilities:
\begin{itemize}
    \item $u_k(g_j) = 1/n$ for $j \in \{k\} \cup \{n+1, \ldots, 2n-1\}$.
    \item For each agent $i \in N \setminus \{k\}$, we have $u_i(g_i) = 1$.
    \item $u_i(g_j) = 0$ for all other pairs $(i, j)$.
\end{itemize}
Agents break ties in favor of lower-index goods.

Consider the allocation obtained by $\pi$.
In the first round, every agent $i \in N$ selects $g_i$.
In every subsequent round, agent $k$ does not get to select any good valuable to her.
Agent~$k$ receives a utility of $1/n$ from $g_k$, and every other agent $i$ receives a utility of $1$ from $g_i$, so the egalitarian welfare is $1/n$.

Now, consider the allocation obtained by $\pi' = (1, 2, \ldots, n, k, k, \ldots, k) \in \pickseq$.
In the first round, every agent $i \in N$ selects $g_i$.
In the subsequent rounds, agent $k$ selects all of $g_{n+1}, g_{n+2}, \ldots, g_m$.
Agent $k$ receives a utility of $1$ from picking all $n$ goods valuable to her, and every other agent~$i$ receives a utility of $1$ from $g_i$, so the egalitarian welfare is~$1$.
This shows that
\begin{align*}
    \frac{\EW(\pi',\mathcal{I})}{\EW(\pi,\mathcal{I})} = \frac{1}{1/n} = n,
\end{align*}
proving the lower bound when $m \geq 2n$.
\end{proof}

\subsection{Price Relative to Recursively Balanced Picking Sequences}
\label{subsec:ew_recbalseq}

Next, we consider the egalitarian price with respect to all \emph{recursively balanced} picking sequences with the same first-round prefix.
Again, we find that the egalitarian price is the same regardless of the picking sequence.

\begin{theorem}
\label{thm:ew_recbalseq}
For any $n \geq 2$, $m \geq n$, and $\pi \in \recbalseq$,
\begin{align*}
    \sup_{\mathcal{I} \in \mathcal{I}_{n, m}} \max_{\pi' \in \recbalseq} \frac{\EW(\pi',\mathcal{I})}{\EW(\pi,\mathcal{I})} &= \min\{\lceil m/n \rceil, \lfloor \log_2 n \rfloor + 1\}.
\end{align*}
\end{theorem}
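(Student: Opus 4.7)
The plan is to prove each of the two terms in the $\min$ separately as an upper bound and then exhibit matching lower-bound instances. A shared fact underlying both upper-bound arguments is that $\pi$ and $\pi'$ both begin with $(1,2,\ldots,n)$, so for every agent $k$ both sequences yield the same round-$1$ pick, which I denote $g_k^{(1)}$; moreover, $g_k^{(1)}$ is $k$'s most preferred good available at her first turn.

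For the $\lceil m/n \rceil$ upper bound I would fix an arbitrary agent $k$ and observe that $|A_k^{\pi'}| \leq \lceil m/n \rceil$ because $\pi'$ is recursively balanced. Each later pick of $k$ under $\pi'$ is chosen from a subset of the goods available in round~$1$, so its value is at most $u_k(g_k^{(1)})$. Since $g_k^{(1)} \in A_k^\pi$ as well, we obtain $u_k(A_k^{\pi'}) \leq \lceil m/n \rceil \cdot u_k(A_k^\pi)$. Applying this inequality at the agent $k^*$ minimising $u_k(A_k^\pi)$, and combining with $\EW(\pi',\mathcal{I}) \leq u_{k^*}(A_{k^*}^{\pi'})$, gives the desired bound on the egalitarian-welfare ratio.

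For the $\lfloor \log_2 n \rfloor + 1$ upper bound I would carry out the directed-graph construction signalled in \Cref{subsec:results}. Fix an agent $k$ and consider the rounds $r \geq 2$ in which $\pi'$ gives $k$ a strictly better good than $\pi$ does. For each such round, some agent $j \neq k$ must have taken a good in rounds $\leq r$ of $\pi$ that $k$ would otherwise have picked under $\pi'$; I would draw an arc $k \to j$ and iterate the construction from $j$. The crux is a doubling argument: each arc along a chain at least doubles the set of agents implicated by the accumulated picks in $\pi$, so a chain has length at most $\lfloor \log_2 n \rfloor + 1$ before it exhausts $N$. This length then caps the per-agent ratio $u_k(A_k^{\pi'})/u_k(A_k^\pi)$, and propagating to the $\EW$-minimising agent as in the previous paragraph yields the bound.

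For the lower bound I would construct two families of instances, one for each branch of the $\min$. When $\lceil m/n \rceil \leq \lfloor \log_2 n \rfloor + 1$, I would adapt the construction of \Cref{thm:ew_allpickseq} into the recursively balanced setting by designing a $\pi' \in \recbalseq$ that seats a chosen ``victim'' agent first in every round after the first, while concentrating her utility on exactly those goods that $\pi$ denies her. When $\lceil m/n \rceil > \lfloor \log_2 n \rfloor + 1$, I would use a geometric utility profile of the form $1, \tfrac{1}{2}, \tfrac{1}{4}, \ldots, \tfrac{1}{2^{\lfloor \log_2 n \rfloor}}$ on the victim's top goods, with $\pi$ engineered so that her per-round value halves and $\pi'$ engineered so she retains the best remaining good each round. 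The main obstacle is the doubling analysis for the $\lfloor \log_2 n \rfloor + 1$ bound: certifying that the obstructor set strictly doubles at each chain step requires carefully exploiting both the recursive balance of $\pi$ and the common first-round prefix, and the matching lower-bound instance must be engineered so that neither the $\lceil m/n \rceil$ bound nor off-by-one issues mask the logarithmic factor.
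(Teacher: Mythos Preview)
Your $\lceil m/n\rceil$ upper bound matches the paper. The real issue is the $\lfloor\log_2 n\rfloor+1$ upper bound. You build a graph on \emph{agents} and speak of a ``chain'' whose length caps the ratio, but a chain of obstructing agents grows by one per step, not by doubling; and even if you meant a tree, it is not clear how the number of implicated agents translates into a bound on $u_k(A_k^{\pi'})/u_k(A_k^\pi)$, since that ratio is governed by \emph{values}, not by a count of bad rounds. The paper's argument is structurally different: it proves a compression lemma stating that for every agent $i$ and every $s\ge 1$, the $s$-th pick under $\pi$ is at least as good (for $i$) as the $((s-1)L+1)$-th pick under $\pi'$, where $L=\lfloor\log_2 n\rfloor+1$. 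This lemma immediately gives $u_k(A_k^{\pi'})\le L\cdot u_k(A_k^\pi)$ by grouping the $\pi'$-picks into blocks of $L$. The lemma is proved by a graph on \emph{goods}: vertices are the goods picked by all agents in rounds $s_0,\dots,s_1$ of $\pi'$ (with $s_1-s_0=L-1$), and edges record both ``same owner in $\pi'$, previous round'' and ``owner in $\pi$, previous round''. Assuming the lemma fails at step $s$, one shows that the subgraph reachable from $g_{s_1}^{k,\pi'}$ is a \emph{perfect binary tree} of height $L-1$, all of whose vertices are goods taken in round $s$ of $\pi$ before agent $k$'s turn; this forces $2^L-1\le n-1$, a contradiction. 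Your sketch does not contain this compression statement or the goods-tree mechanism, and I do not see how to recover the bound from an agent-level chain argument.

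For the lower bound you write ``with $\pi$ engineered so that\ldots'', but $\pi$ is \emph{given}: the theorem asserts the same value for every $\pi\in\recbalseq$, so the instance and $\pi'$ must be tailored to an arbitrary $\pi$. The paper does not use a geometric profile; instead it identifies the agent $i_n$ who picks last in round~2 of $\pi$, gives her uniform value $1/(2n-1)$ on $g_1,\dots,g_{2n-1}$, and gives each other agent $i_k$ tiny value $\epsilon$ on a carefully chosen set $M_k$ depending on her round-2 position in $\pi$. Under $\pi$, agents $i_1,\dots,i_{n-1}$ strip $g_{n+1},\dots,g_{2n-1}$ in round~2, leaving $i_n$ with a single valuable good. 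The comparator $\pi'$ reverses the round-2 order of $\pi$ in every subsequent round, and a separate counting lemma shows that at the start of each round $r\le\lfloor\log_2 n\rfloor+1$ exactly $\lfloor n/2^{r-2}\rfloor-1$ of the valuable goods remain, so $i_n$ secures one per round. Your geometric-profile idea may be salvageable, but as stated it does not handle an arbitrary $\pi$, and the halving behaviour under $\pi$ that you want is not something you can impose without designing the other agents' utilities around $\pi$'s second-round order.
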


\begin{proof}
We begin by proving the upper bound.
Let $\pi, \pi' \in \recbalseq$ and $\mathcal{I} \in \mathcal{I}_{n,m}$.

First, we show that when $n \leq m \leq n(\lfloor \log_2 n \rfloor+1)$, an upper bound is $\min\{\lceil m/n \rceil, \lfloor \log_2 n \rfloor + 1\} = \lceil m/n \rceil$.
Fix $k \in N$, and let $g$ be the good picked by agent $k$ in the first round of $\pi$.
Note that this good is also picked by agent $k$ in the first round of $\pi'$, since both picking sequences have the same prefix for the first round.
Agent $k$ receives a utility of at least $u_k(g)$ in the allocation obtained by $\pi$.
In $\pi'$, agent $k$ receives at most $\lceil m/n \rceil$ goods since $\pi'$ is recursively balanced, and the utility of each good is at most $u_k(g)$, so agent $k$ receives a utility of at most $\lceil m/n \rceil \cdot u_k(g)$.
Therefore, the ratio of agent $k$'s utility in the allocation obtained by $\pi'$ to that in the allocation obtained by $\pi$ is at most $\lceil m/n \rceil$.
Since $k \in N$ was arbitrarily chosen, we have
\begin{align*}
    \frac{\EW(\pi',\mathcal{I})}{\EW(\pi,\mathcal{I})} \leq \lceil m/n \rceil,
\end{align*}
proving the upper bound when $n \leq m \leq n(\lfloor \log_2 n \rfloor+1)$.

Next, we show that when $m > n(\lfloor \log_2 n \rfloor+1)$, an upper bound is $\min\{\lceil m/n \rceil, \lfloor \log_2 n \rfloor + 1\} = \lfloor \log_2 n \rfloor+1$.
For each $i \in N$ and positive integer~$r$, let $g_r^{i,\pi}$ and $g_r^{i,\pi'}$ be the goods selected by agent~$i$ in round~$r$ of $\pi$ and $\pi'$ respectively (for notational simplicity, if agent~$i$ does not select a good in round $r$, then we define the good $g_r^{i,\pi}$ or $g_r^{i,\pi'}$ to be a dummy good of zero utility).
Let $L = \lfloor \log_2 n \rfloor + 1$.
We first show the relationships between the goods via the following lemma.

\begin{restatable}{lemma}{lemEwRecbalseqUbLemma}
\label{lem:ew_recbalseq_ub_lemma}
For each $i \in N$ and positive integer $s$, we have
\[
    g_s^{i,\pi} \succeq_i g_{(s-1)L+1}^{i,\pi'}.
\]
\end{restatable}

\begin{proof}
We prove the statement by induction on $s$.
The base case of $s=1$ trivially holds since $g_1^{i,\pi} = g_1^{i,\pi'}$ for each $i \in N$.
Now, let $s \geq 2$ be a positive integer and assume that $g_{s'}^{i,\pi} \succeq_i g_{(s'-1)L+1}^{i,\pi'}$ holds for all $i \in N$ and $s' < s$.
We shall prove the statement for some fixed agent $k \in N$ and for $s$.
Let $s_1 = (s-1)L+1$, and assume for the sake of contradiction that $g_s^{k,\pi} \prec_k g_{s_1}^{k,\pi'}$.

Let $s_0 = (s-2)L+2$.
Since $L \geq 2$, we have $s_0 = (s-1)L-L+2 \leq (s-1)L < s_1$.
Also, $s_0 \geq (s-2)(1)+2 = s$.

Let $M' = \{g_t^{j,\pi'} \mid j \in N,\, s_0 \leq t \leq s_1\}$.
Note that the goods in $M'$ have not been selected yet at the start of round $s_0$ of $\pi'$.
We claim that every good in $M'$ has also not been selected yet at the start of round $s$ of $\pi$.
Indeed, if some agent $j \in N$ selected some $g \in M'$ in round $s' < s$ of $\pi$, then
\begin{align*}
    g &= g_{s'}^{j,\pi} \\
    &\succeq_j g_{s-1}^{j,\pi} \tag{since \protect{$s' \leq s-1$}} \\
    &\succeq_j g_{(s-2)L+1}^{j,\pi'} \tag{by inductive hypothesis} \\
    &= g_{s_0-1}^{j,\pi'} \tag{since \protect{$(s-2)L+1 = s_0-1$}} \\
    &\succ_j g, \tag{\protect{agent $j$ selected $g_{s_0-1}^{j,\pi'}$ over $g$ in round $s_0-1$ of $\pi'$}}
\end{align*}
a contradiction.
Therefore, every good in $M'$ is available at the start of round $s$ of $\pi$.

Define a directed graph $G$ such that the vertices represent the goods in $M'$, and there is an edge $g_{t}^{j,\pi'} \to g_{t-1}^{j',\pi'}$ if and only if $j'=j$ or $g_{t}^{j,\pi'} \in A_{j'}^\pi$.
For each $j \in N$, the vertex $g_{s_0}^{j,\pi'}$ has an outdegree of zero, and the vertex $g_t^{j,\pi'}$ has outdegree one or two for each $s_0 < t \leq s_1$. 

Consider the subgraph $G_k$ of $G$ induced by the vertex $g_{s_1}^{k,\pi'}$ and all of its successors (i.e., the vertices reachable from $g_{s_1}^{k,\pi'}$).
We claim that every good $g$ represented by a vertex in $G_k$ is selected by some agent $j' \neq k$ in round~$s$ of~$\pi$ before agent $k$'s turn in the same round.

To prove the claim, we first define, for each good $g$ represented by a vertex in $G_k$, its \emph{diversity index} $d$ to be the smallest number for which there exists a path from $g_{s_1}^{k,\pi'}$ to~$g$ such that the goods on the path belong to a total of at most $d$ agents (in the allocation produced by~$\pi'$).
We prove the claim by induction on $d$, beginning with the base case of $d=1$, i.e., the goods $g_t^{k,\pi'}$ for $s_0 \leq t \leq s_1$.
For the base case, we have the chain of relations $g_{s_0}^{k,\pi'} \succ_k g_{s_0+1}^{k,\pi'} \succ_k \cdots \succ_k g_{s_1}^{k,\pi'} \succ_k g_s^{k,\pi}$, where the last relation is by the assumption for the sake of contradiction, and the other relations are due to agent~$k$'s order of picking the goods in $\pi'$.
Since all the goods $g_t^{k,\pi'}$ for $s_0\le t\le s_1$ are available at the start of round $s$ of $\pi$ but agent~$k$ selected $g_s^{k,\pi}$ in round $s$ of $\pi$ instead, it must be the case that $g_t^{k,\pi'}$ are all selected in round~$s$ of $\pi$ by other agents who appear before agent $k$ in round $s$ of $\pi$.
This completes the proof of the base case.

For the inductive step, assume that there exists some positive integer $d$ such that every good with diversity index at most $d$ is selected by some agent in round $s$ of $\pi$ before agent $k$'s turn in the same round.
Consider a good $g_t^{j,\pi'}$ with diversity index $d+1$.
Then, there exist $t' \geq t$ and $j' \in N \setminus \{j\}$ such that
the edge $g_{t'+1}^{j',\pi'} \to g_{t'}^{j,\pi'}$ is in $G_k$, where $g_{t'+1}^{j',\pi'}$ has diversity index at most $d$.
We have the chain of relations $g_{s_0}^{j,\pi'} \succ_j g_{s_0+1}^{j,\pi'} \succ_j \cdots \succ_j g_{t'}^{j,\pi'} \succ_j g_{t'+1}^{j',\pi'}$ due to agent $j$'s order of picking the goods in $\pi'$.
However, the edge $g_{t'+1}^{j',\pi'} \to g_{t'}^{j,\pi'}$ implies that $g_{t'+1}^{j',\pi'} \in A_j^\pi$, and the inductive hypothesis states that $g_{t'+1}^{j',\pi'}$ is selected by some agent in round $s$ of $\pi$ before agent $k$'s turn in the same round.
Therefore, agent $j$ is the agent who selected $g_{t'+1}^{j',\pi'}$ in round $s$ of~$\pi$; in particular, agent~$j$ appears before agent~$k$ in this round.
However, the goods $g_{s_0}^{j,\pi'}, \ldots, g_{t'}^{j,\pi'}$ are available at the start of round $s$ of $\pi$ since they are in $M'$, but are not available to agent $j$ in round $s$ of $\pi$ (else, agent $j$ would have selected one of these goods instead of $g_{t'+1}^{j',\pi'}$ in round $s$ of $\pi$).
This means that the goods $g_{s_0}^{j,\pi'}, \ldots, g_{t'}^{j,\pi'}$, including $g_t^{j,\pi'}$, are all selected in round $s$ of $\pi$ by other agents who appear before agent $j$, who in turn appears before agent $k$, in round $s$ of $\pi$.
This completes the induction.
Therefore, every good represented by a vertex in $G_k$ is selected by some agent in round~$s$ of $\pi$ before agent~$k$'s turn in the same round.
Moreover, we also see from the proof that every good $g_t^{i,\pi'}$ represented by a vertex in $G_k$ is selected in round $s$ of~$\pi$ by some agent who is not $i$.
This means that every vertex in $G_k$ which has outdegree at least one has exactly two direct successors.

We shall next prove that $G_k$ is a tree.
Suppose on the contrary that $G_k$ is not a tree.
Then, there exists a vertex $g_{t}^{j,\pi'}$ in~$G_k$ such that there are at least two distinct paths $\mathcal{P}_1$ and $\mathcal{P}_2$ leading from $g_{s_1}^{k,\pi'}$ to it.
Without loss of generality, assume that there is no common vertex between $\mathcal{P}_1$ and $\mathcal{P}_2$ except for the start vertex~$g_{s_1}^{k,\pi'}$ and the end vertex~$g_{t}^{j,\pi'}$; otherwise, $g_{t}^{j,\pi'}$ can be replaced by the common vertex with the shortest distance from $g_{s_1}^{k,\pi'}$.

For each $c \in \{1, 2\}$, let $t_c \in \{t+1, t+2, \ldots, s_1\}$ be the smallest number such that $g_{t_c}^{j_c,\pi'}$ is a good in the path $\mathcal{P}_c$ and $j_c \neq j$.
We first observe that at least one of $t_1$ and $t_2$ exists and equals $t+1$; otherwise, both paths must go through $g_{t+1}^{j, \pi'}$, contradicting the assumption on the choice of $g_t^{j, \pi'}$. 
Without loss of generality, assume that $t_1 = t+1$.
We next show that $t_2$ must also exist.
Note that since $j_1 \neq j$, this means that $g_{t_1}^{j_1, \pi'} \in A_j^\pi$, that is, agent $j$ picks good $g_{t_1}^{j_1, \pi'}$ in $\pi$. By the earlier proven claim, agent $j$ picks this good in round $s$ before agent $k$'s turn in the same round.
Hence, $j \neq k$, so $t_2$ also exists.

Next, observe that $g_{t_1}^{j_1, \pi'}$ is an internal vertex of path~$\mathcal{P}_1$ and hence does not belong to path~$\mathcal{P}_2$.
Therefore, $g_{t_1}^{j_1, \pi'}$ and $g_{t_2}^{j_2, \pi'}$ are two distinct goods. 
On the other hand, by definition of $t_c$, there is an edge from $g_{t_c}^{j_c,\pi'}$ to $g_{t_c-1}^{j,\pi'}$ in $\mathcal{P}_c$ for each $c \in \{1, 2\}$.
However, this implies that agent $j$ selected two distinct goods in round $s$ of $\pi$, contradicting the assumption that $\pi$ is recursively balanced.
Therefore, $G_k$ is a tree.

Since every non-leaf vertex in $G_k$ has exactly two children, $G_k$ is a perfect binary tree of height $s_1-s_0$.
By counting, there are $2^{s_1-s_0+1}-1$ vertices in $G_k$.
Since the goods represented by the vertices in $G_k$ are selected by agents who appear before agent $k$ in round $s$ of $\pi$, there are at most $n-1$ vertices in $G_k$.
Hence, $2^{s_1-s_0+1}-1 \leq n-1$, which means that $s_1 \leq s_0 - 1 + \log_2 n$.
Since $s_1$ is an integer, we have
\begin{align*}
    s_1 &\leq s_0 - 1 + \lfloor \log_2 n \rfloor \\
    &= ((s-2)L + 2) - 1 + \lfloor \log_2 n \rfloor \\
    &= (s-1)L,
\end{align*}
contradicting our definition that $s_1 = (s-1)L + 1$.
This completes the inductive step of \Cref{lem:ew_recbalseq_ub_lemma}.
\end{proof}

We continue the proof of (the upper bound of) \cref{thm:ew_recbalseq}.
Fix $k \in N$.
We have
\begin{align*}
    u_k\left(A_k^{\pi'}\right) 
    &= \sum_{r=1}^{\lceil m/n \rceil} u_k\left(g_r^{k,\pi'}\right) \\
    &= \sum_{r=1}^{L\lceil m/n \rceil} u_k\left(g_r^{k,\pi'}\right) \tag{\protect{$u_k\left(g_r^{k,\pi'}\right) = 0$} for $r > \lceil m/n \rceil$} \\
    &= \sum_{s=1}^{\lceil m/n \rceil} \sum_{r=(s-1)L+1}^{sL} u_k\left(g_r^{k,\pi'}\right) \\
    &\leq \sum_{s=1}^{\lceil m/n \rceil} \sum_{r=(s-1)L+1}^{sL} u_k\left(g_{(s-1)L+1}^{k,\pi'}\right) \\
    &= \sum_{s=1}^{\lceil m/n \rceil} L \cdot u_k\left(g_{(s-1)L+1}^{k,\pi'}\right) \\
    &\leq \sum_{s=1}^{\lceil m/n \rceil} L \cdot u_k\left(g_s^{k,\pi}\right) \tag{by \cref{lem:ew_recbalseq_ub_lemma}} \\
    &= L \cdot u_k\left(A_k^\pi\right).
\end{align*}
This shows that in the allocation obtained with $\pi'$, agent $k$ receives a utility of at most $L$ times her utility in the allocation obtained with $\pi$.
Since $k \in N$ was arbitrarily chosen, we have
\begin{align*}
    \frac{\EW(\pi',\mathcal{I})}{\EW(\pi,\mathcal{I})} \leq L = \lfloor \log_2 n \rfloor + 1,
\end{align*}
proving the upper bound when $m > n(\lfloor \log_2 n \rfloor+1)$.

We continue by proving the lower bound.
When $m = n$, the allocation obtained by $\pi$ is identical to that obtained by any $\pi' \in \recbalseq$, so the ratio between the egalitarian welfare of the two allocations is $1 = \min\{\lceil m/n \rceil, \lfloor \log_2 n \rfloor + 1\}$.

Next, we show that when $n+1 \leq m \leq 2n-1$, a lower bound is $2 = \min\{\lceil m/n \rceil, \lfloor \log_2 n \rfloor + 1\}$.
Let $\pi \in \recbalseq$, and let $k \in N$ be an agent who does not get to pick in the second round---such an agent must exist since $n+1 \leq m \leq 2n-1$ and so the second round is incomplete.
Consider an instance~$\mathcal{I} \in \mathcal{I}_{n,m}$ with the following utilities:
\begin{itemize}
    \item For agent $k$, we have $u_k(g) = 1/m$ for $g \in M$.
    \item For each agent $i \in N \setminus \{k\}$, we have $u_i(g_i) = 1$.
    \item $u_i(g_j) = 0$ for all other pairs $(i, j)$.
\end{itemize}
Agents break ties in favor of lower-index goods.

Consider the allocation obtained by $\pi$.
Agent $k$ only selects $g_k$ and receives a utility of $1/m$, while every other agent~$i$ receives a utility of $1$ from $g_i$, so the egalitarian welfare is $1/m$.
Now, consider the allocation obtained by any $\pi' \in \recbalseq$ where agent $k$ picks first in the second round.
Agent $k$ selects $g_k$ and $g_{n+1}$, receiving a utility of $2/m$, while every other agent $i$ receives a utility of $1$ from $g_i$, so the egalitarian welfare is $2/m$.
This shows that
\begin{align*}
    \frac{\EW(\pi',\mathcal{I})}{\EW(\pi,\mathcal{I})} = \frac{2/m}{1/m} = 2,
\end{align*}
proving the lower bound when $n+1 \leq m \leq 2n-1$.

Finally, we show that when $m \geq 2n$, a lower bound is $\min\{\lceil m/n \rceil, \lfloor \log_2 n \rfloor + 1\}$.
Let $\pi \in \recbalseq$, and define $i_k = a_{n+k}$ for each $k \in \{1, \ldots, n\}$.
This means that $\pi = (1, \ldots, n \mid i_1, \ldots, i_n \mid \cdots)$.
Consider an instance~$\mathcal{I} \in \mathcal{I}_{n,m}$ with the following utilities, where $0 < \epsilon <1/((2n-1)m)$:
\begin{itemize}
    \item For agent $i_n$, we have $u_{i_n}(g_j) = 1/(2n-1)$ for $j \in \{1, \ldots, 2n-1\}$.
    \item For each agent $i_k \in N \setminus \{i_n\}$, we have $u_{i_k}(g_j) = \epsilon$ if $g_j \in M_k \coloneqq \{g_{n+1}, \ldots, g_{n+k}\} \cup \{g_{2n}, \ldots, g_m\}$.
    \item For each agent $i_k \in N \setminus \{i_n\}$, we have $u_{i_k}(g_{i_k}) = 1 - |M_k|\epsilon$.
    \item $u_i(g_j) = 0$ for all other pairs $(i, j)$.
\end{itemize}
Agents break ties in favor of lower-index goods.
Note that $1 - |M_k|\epsilon > 1 - m\epsilon > 1 - 1/(2n-1) > 1/(2n-1) > \epsilon$.

Consider the allocation obtained by $\pi$.
In the first round, every agent $i_k \in N$ selects $g_{i_k}$.
In every subsequent round, every agent $i_k \in N \setminus \{i_n\}$ selects a good with utility $\epsilon$, while agent $i_n$ selects a good with zero utility.
Note that every agent $i_k$ receives at least one valuable good, namely, $g_{i_k}$ in the first round, and agent $i_n$ receives exactly one valuable good.
Hence, every agent $i_k \in N \setminus \{i_n\}$ receives a utility of at least $1 - |M_k|\epsilon$, while agent $i_n$ gets a utility of exactly $1/(2n-1)$.
Since $1 - |M_k|\epsilon >  1/(2n-1)$, the egalitarian welfare is $1/(2n-1)$.

Now, define $\pi' \in \recbalseq$ by $a'_{(r-1)n+k} = i_{n-k+1}$ for all $r \geq 2$ and $k \in \{1, \ldots, n\}$. This means that $\pi' = (1, \ldots, n \mid i_n, \ldots, i_1 \mid \cdots \mid i_n, \ldots, i_1 \mid i_n, \ldots, i_{\lceil m/n \rceil n - m + 1})$.
Consider the allocation obtained by $\pi'$.
In the first round, every agent $i_k \in N$ selects $g_{i_k}$.
Let $M' = \{g_{n+1}, \ldots, g_{2n-1}\}$ be the goods that are valuable to agent $i_n$, excluding $g_1, \ldots, g_n$.
We first show, via the following lemma, that there is at least one good available in $M'$ at the start of round $r$ for each $2 \leq r \leq \lfloor \log_2 n \rfloor +1$.
Since agent $i_n$ always picks first in the second round onwards, this gives a lower bound on the number of valuable goods obtained by agent $i_n$ under picking sequence $\pi'$.

\begin{restatable}{lemma}{lemEwRecbalseqLbLemma}
\label{lem:ew_recbalseq_lb_lemma}
Let $2 \leq r \leq \lfloor \log_2 n \rfloor + 1$.
At the start of round $r$ of $\pi'$, if some good $g_j$ is available for some $j \geq 2n$, then there are exactly $\lfloor n/2^{r-2} \rfloor-1$ goods available in $M'$, namely, $g_{2n-\lfloor n/2^{r-2} \rfloor+1}, \ldots, g_{2n-1}$.
\end{restatable}

\begin{proof}
We prove this by induction on $r$.
For the base case of $r=2$, every agent selected goods in $\{g_1, \ldots, g_n\}$ in the first round, so $g_{2n}$ is available, and each of the $n-1 = \lfloor n/2^0 \rfloor-1$ goods in $M'$ is available.

Let $2 \leq r \leq \lfloor \log_2 n \rfloor$, and suppose that the statement is true for~$r$ as the inductive hypothesis; we shall prove the statement for $r+1$.
Assume that some good $g_j$ is available for some $j\ge 2n$ at the start of round $r+1$, and therefore also at the start of round~$r$.
By the inductive hypothesis, the goods $g_{2n-\lfloor n/2^{r-2} \rfloor+1}, \ldots, g_{2n-1}$ are available at the start of round~$r$.
Consider the goods selected in round~$r$.
Since the picking sequence for round $r$ is $i_n, i_{n-1}, \ldots, i_1$, starting from $p=1$ going upwards, agent $i_{n-p+1}$ selects her most preferred good (according to $\succ_{i_{n-p+1}}$).
Note that $\lfloor n/2^{r-2} \rfloor-1 \geq 1$, so agent $i_n$ selects a (valuable) good in $M'$, namely, $g_{2n-\lfloor n/2^{r-2} \rfloor+1}$.
For the subsequent agents, agent $i_{n-p+1}$ selects the good $g_{2n-\lfloor n/2^{r-2} \rfloor+p}$ provided that this good is in $M_{n-p+1}$, i.e., $2n-\lfloor n/2^{r-2} \rfloor+p \leq n+(n-p+1)$; the remaining agents select goods $g_j$ for $j \geq 2n$ (we may assume that such goods are available for every remaining agent; otherwise, no good $g_j$ for $j\ge 2n$ is available at the start of round $r+1$, contradicting our assumption).
The inequality can be solved as $p \leq (Q+1)/2$, where $n = 2^{r-2}\cdot Q + R$ for some integers $Q \ge 2$ and $0 \leq R < 2^{r-2}$.
Therefore, $\lfloor (Q+1)/2 \rfloor$ goods in $M'$ are selected in round $r$.
Since there were $\lfloor n/2^{r-2} \rfloor-1$ goods available in $M'$ at the start of the round, and $\lfloor (Q+1)/2 \rfloor$ goods in $M'$ are selected in this round, the number of goods remaining from $M'$ at the end of the round is
\begin{align*}
    &\left( \left\lfloor \frac{n}{2^{r-2}} \right\rfloor-1 \right) - \left\lfloor \frac{Q+1}{2} \right\rfloor \\ 
    &= \left( Q-1 \right) - \left\lfloor \frac{Q+1}{2} \right\rfloor \\
    &= \left\lceil \frac{2Q-2}{2} - \frac{Q+1}{2} \right\rceil \\
    &= \left\lceil \frac{Q-1}{2} \right\rceil - 1 \\
    &= \left\lfloor \frac{Q}{2} \right\rfloor - 1 \\
    &= \left\lfloor \frac{Q}{2} + \frac{R}{2^{r-1}} \right\rfloor - 1 \tag{since \protect{$R/2^{r-1} < 1/2$}} \\
    &= \left\lfloor \frac{2^{r-2} \cdot Q + R}{2^{r-1}} \right\rfloor - 1 \\
    &= \left\lfloor \frac{n}{2^{r-1}} \right\rfloor - 1.
\end{align*}
In other words, these $\lfloor n/2^{r-1} \rfloor-1$ goods were not taken by any agent in round $r$.
Therefore, at the start of round $r+1$, there are $\lfloor n/2^{r-1} \rfloor-1$ goods in $M'$ available, namely, $g_{2n-\lfloor n/2^{r-1} \rfloor+1}, \ldots, g_{2n-1}$.
This completes the proof of the inductive step.
\end{proof}

We can now continue the proof of \cref{thm:ew_recbalseq} by finding the lower bound when $m \geq 2n$.
Consider round $r$ where $2 \leq r \leq \min\{\lceil m/n \rceil, \lfloor \log_2 n \rfloor + 1\}$.
Suppose first that some good $g_j$ is available for some $j \geq 2n$ at the start of round~$r$.
Since $r \leq \lfloor \log_2 n \rfloor + 1$, there is at least one good in $M'$ by \cref{lem:ew_recbalseq_lb_lemma}, and agent $i_n$ selects a valuable good in $M'$.
Otherwise, $g_j$ is not available for any $j \geq 2n$, and so the only goods available are in $M'$.
Since $r \leq \lceil m/n \rceil$, there is still some good available at the start of round~$r$, which is in $M'$, so agent $i_n$ again selects a valuable good in $M'$.
Hence, for each round $r$ where $2 \leq r \leq \min\{\lceil m/n \rceil, \lfloor \log_2 n \rfloor + 1\}$, agent $i_n$ receives a valuable good.
Since agent $i_n$ also receives a valuable good in the first round, she receives a total of at least $\min\{\lceil m/n \rceil, \lfloor \log_2 n \rfloor + 1\}$ valuable goods, with a total utility of at least $\min\{\lceil m/n \rceil, \lfloor \log_2 n \rfloor + 1\}/(2n-1)$.
Every other agent receives a utility of at least $1 - |M_k|\epsilon > 1 - 1/(2n-1) = (2n-2)/(2n-1)$.
Since $n \geq 2$, we have $2n - 2 \geq n \ge \log_2 n + 1 \geq \min\{\lceil m/n \rceil, \lfloor \log_2 n \rfloor + 1\}$.
Hence, the egalitarian welfare under $\pi'$ is at least $\min\{\lceil m/n \rceil, \lfloor \log_2 n \rfloor + 1\}/(2n-1)$.
It follows that
\begin{align*}
    \frac{\EW(\pi',\mathcal{I})}{\EW(\pi,\mathcal{I})} &\ge \frac{\min\{\lceil m/n \rceil, \lfloor \log_2 n \rfloor + 1\}/(2n-1)}{1/(2n-1)} \\
    &= \min\{\lceil m/n \rceil, \lfloor \log_2 n \rfloor + 1\},
\end{align*}
proving the lower bound when $m \geq 2n$.
\end{proof}

\section{Maximin Share}
\label{sec:mms}

In this section, we compare recursively balanced picking sequences using MMS approximation.
Interestingly, for most sequences, the MMS guarantee depends only on the picking sequence of agent~$n$ (who picks last in the first round), regardless of the picking sequences of other agents.
Intuitively speaking, the disadvantage incurred by agent $n$ in the first round is so significant that letting her pick first in every subsequent round does not sufficiently compensate for it---she will still have the lowest MMS guarantee among all agents.

\subsection{Maximin Share Guarantees for Individual Agents}
\label{subsec:mms_agents}

Before we state our main results in \Cref{sec:MMS-picking}, we make a few observations regarding the maximin guarantees for each agent.
While it should not come as a surprise that the agent who picks last in the first round has a lower MMS guarantee than other agents, the extent of the decrease is noteworthy.
In this section, we find an (almost tight) MMS guarantee for each agent depending only on her positions in the picking sequence.
Furthermore, we obtain simpler bounds which will be useful in proving our main results.

Our main lemma describes the MMS guarantee for each agent as follows.

\begin{lemma}
\label{lem:mms_agent_lb}
Let $n \geq 2$, $m \geq n$, $\pi \in \recbalseq$, and $i \in N$.
Let agent $i$'s picking sequence be $\pi_i = (t_1, \ldots, t_R)$ and define $t_{R+1} = m+1$.
Then, in the allocation obtained with $\pi$, agent $i$ is guaranteed to get utility at least
\[
    \min_{r \in \{2, \ldots, R+1\}} \, \frac{(r-1) \cdot (n+1-i)}{t_r-i}
\]
times her MMS.
\end{lemma}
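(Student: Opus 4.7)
The plan is to fix an MMS partition $(P_1, \ldots, P_n)$ of agent $i$, normalised so that $u_i(P_j) \geq \MMS_i$ for every $j$, and to write $v_s$ for the value to agent $i$ of her $s$-th pick (so $v_1 \geq v_2 \geq \cdots \geq v_R$ by greedy selection). My target is to exhibit some $r \in \{2, \ldots, R+1\}$ for which $u_i(A_i^\pi) \geq \frac{(r-1)(n+1-i)}{t_r - i}\MMS_i$; since the right-hand side is at least $\min_{r'}\frac{(r'-1)(n+1-i)}{t_{r'}-i}\MMS_i$ for any choice of $r$, such an inequality already implies the conclusion of the lemma.

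First, I would exploit the initial round. Just before agent $i$'s first pick at position $t_1=i$, the $i-1$ picks made by agents $1,\ldots,i-1$ can touch at most $i-1$ bundles of her MMS partition, leaving a set $\mathcal{B}$ of at least $n+1-i$ completely untouched bundles whose combined value to her is at least $(n+1-i)\MMS_i$. This is the source of the factor $(n+1-i)$ in the numerator of the formula.

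Next, I would analyse the window $W=[t_1,t_r-1]$ of $t_r-i$ consecutive picks, of which $r-1$ are by agent $i$ and $t_r-i-(r-1)$ are by other agents. Agent $i$'s greedy behaviour gives the key inequality that for each pick $v_s$ made inside $W$, every good picked later within $W$ has value to her at most $v_s$. Combined with the fact that all the value in $\mathcal{B}$ must eventually be carried away either by agent $i$ or by the other agents, an amortised accounting should relate $v_1+\cdots+v_{r-1}$ to the fraction $\frac{r-1}{t_r-i}$ of the window's picks multiplied by the starting surplus $(n+1-i)\MMS_i$.

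The main obstacle is making this amortisation tight. The delicate point is that both agent $i$'s picks and the other agents' picks can come from arbitrary bundles of her MMS partition, yet the bound must anchor to the specific set $\mathcal{B}$ identified above. I expect the cleanest route is either a clever choice of $r$---for instance, the smallest index at which a running deficit between untouched value and value already extracted first materialises---or a contradiction argument in which the failure of the bound forces the other agents to extract more value from $\mathcal{B}$ within $W$ than the greedy inequalities $u_i(\pi(t)) \le v_s$ for $t > t_s$ actually permit.
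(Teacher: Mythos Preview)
Your setup is sound: the untouched-bundles observation correctly delivers the factor $(n+1-i)$, and your greedy inequality (every good picked at a time $t\in(t_s,t_{s+1})$ is worth at most $v_s$ to agent~$i$) is exactly what is needed. But the proof is genuinely incomplete at the step you yourself flag: you never actually carry out the amortisation, and the two routes you speculate about---a clever choice of $r$, or a contradiction from over-extraction of value out of $\mathcal{B}$---are not obviously workable. In particular, tracking the specific set $\mathcal{B}$ is a red herring: goods outside $\mathcal{B}$ are being picked in the same window, and you have no separate handle on which picks land in $\mathcal{B}$ versus elsewhere, so ``the value in $\mathcal{B}$ must eventually be carried away'' gives you nothing about any bounded window $[t_1,t_r-1]$.

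The paper closes the gap without choosing any~$r$. It uses that the total value (to agent~$i$) of \emph{all} picks from time $t_1=i$ onward is at least $(n+1-i)\,\MMS_i$ (this is your $\mathcal{B}$ argument, or equivalently the bound $\MMS_i\le u_i(\{g_i,\dots,g_m\})/(n+1-i)$) and simultaneously at most $\sum_{s=1}^{R}(t_{s+1}-t_s)\,v_s$ by applying your greedy inequality block by block over the entire sequence, not just a window. It then remains to show
\[
\frac{\sum_{s=1}^{R} v_s}{\sum_{s=1}^{R}(t_{s+1}-t_s)\,v_s}\;\ge\;\min_{r\in\{2,\dots,R+1\}}\frac{r-1}{t_r-i}.
\]
This is a pure inequality about a nonincreasing sequence $v_1\ge\cdots\ge v_R\ge 0$, and the paper dispatches it by Abel summation: set $x_s=v_s-v_{s+1}\ge 0$ (with $v_{R+1}=0$), so $v_s=\sum_{j\ge s}x_j$. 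The numerator becomes $\sum_j j\,x_j$ and the denominator becomes $\sum_j (t_{j+1}-t_1)\,x_j$, so the ratio is a weighted mean of the numbers $j/(t_{j+1}-t_1)=(r-1)/(t_r-i)$ with nonnegative weights, hence at least their minimum. No choice of $r$ is needed; this mediant-type inequality is precisely the ``amortised accounting'' you were looking for, and it is the step your proposal is missing.
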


\begin{proof}
Take any instance~$\mathcal{I} \in \mathcal{I}_{n, m}$.
Without loss of generality, we assume that $g_1 \succ_i g_2 \succ_i \cdots \succ_i g_m$,
and let $M_{\geq i} = \{g_i, g_{i+1}, \ldots, g_m\}$.
Suppose that agent $i$'s MMS is non-zero, as the lemma holds trivially otherwise.

We first show that agent $i$'s MMS is at most $u_i(M_{\geq i})/(n+1-i)$.
Consider any MMS partition $\{P_1, \ldots, P_n\}$ of agent~$i$.
Without loss of generality, assume that each $P_k$ is non-empty, and $P_1, \ldots, P_n$ are sorted by the most preferred good in $P_k$ according to $\succ_i$, in decreasing order by $\succ_i$.
That is, $\min \{j : g_j \in P_1\} < \min \{j : g_j \in P_2\} < \cdots < \min \{j : g_j \in P_n\}$.
Note that $\min \{j : g_j \in P_k\} \geq k$ for every $k\in N$.
This implies that $g_1, \ldots, g_{i-1} \not\in \bigcup_{k=i}^n P_k$, and so $\bigcup_{k=i}^n P_k \subseteq M_{\geq i}$. It follows that
\begin{equation}
\label{eq:mms_ub}
\begin{aligned}
    \MMS_{i} &\leq \min_{k \in \{i, \ldots, n\}} u_{i}(P_k) \\
    &\leq \frac{1}{n+1-i} \cdot u_i\left(\textstyle\bigcup_{k=i}^n P_k\right) \\
    &\leq \frac{u_i(M_{\geq i})}{n+1-i}.
\end{aligned}    
\end{equation}

We continue by finding an upper bound on $u_i(M_{\geq i})$.
In her $r$-th pick for each $r \in \{1, \ldots, R\}$, agent $i$ can choose a good of value at least $u_i(g_{t_r})$.
Hence, agent $i$'s total utility under picking sequence~$\pi$ is 
\begin{equation}
\label{eq:bundle_lb}
    u_i(A_i) \geq \sum_{r=1}^R u_i(g_{t_r}).
\end{equation}
Furthermore, we have
\begin{equation}
\label{eq:goods_ub}
\begin{aligned}
    u_i(M_{\geq i}) &= \sum_{j=i}^{m} u_i(g_j) \\
    &= \sum_{j=t_1}^{t_{R+1}-1} u_i(g_j) \\
    &= \sum_{r=1}^R \sum_{j=t_r}^{t_{r+1}-1} u_i(g_j) \\
    &\leq \sum_{r=1}^R \sum_{j=t_r}^{t_{r+1}-1} u_i(g_{t_r}) \\
    &= \sum_{r=1}^R (t_{r+1}-t_r)\cdot u_i(g_{t_r}).
\end{aligned}
\end{equation}
From \eqref{eq:mms_ub} and the assumption that agent $i$'s MMS is non-zero, we have $u_i(M_{\geq i}) > 0$.
Then, combining \eqref{eq:bundle_lb} and~\eqref{eq:goods_ub} gives
\begin{align}
\label{eq:mms_agent_lb}
    \frac{u_i(A_i)}{u_i(M_{\geq i})} \geq \frac{\sum_{r=1}^{R} u_i(g_{t_r})}{\sum_{r=1}^{R} (t_{r+1}-t_r)\cdot u_i(g_{t_r})}.
\end{align}

Let $y$ be the value of the fraction on the right-hand side of~\eqref{eq:mms_agent_lb}.
Let $x_R = u_i(g_{t_R})$, and for each $r \in \{1, \ldots, R-1\}$, let $x_r = u_i(g_{t_r}) - u_i(g_{t_{r+1}})$.
By telescoping sum, $u_i(g_{t_r}) = x_r + x_{r+1} + \cdots + x_R$.
Therefore, 
\begin{align*}
    y &= \frac{\sum_{r=1}^{R} u_i(g_{t_r})}{\sum_{r=1}^{R} (t_{r+1}-t_r)\cdot u_i(g_{t_r})} \\
    &= \frac{\sum_{r=1}^R \sum_{s=r}^R x_s}{\sum_{r=1}^R \sum_{s=r}^R (t_{r+1}-t_r) x_s} \\
    &= \frac{x_1 + 2x_2 + \cdots + Rx_R}{(t_2-t_1)x_1 + (t_3-t_1)x_2 + \cdots + (t_{R+1}-t_1)x_R}.
\end{align*}
Observe that $y$ is a weighted average of $1/(t_2-t_1), 2/(t_3-t_1), \ldots, R/(t_{R+1}-t_1)$, with the weights being $(t_2-t_1)x_1, (t_3-t_1)x_2, \ldots, (t_{R+1}-t_1)x_R$ respectively.
(Note that the weights may \emph{not} sum up to~$1$, so we have to normalize by dividing with the sum of the weights.)
Hence, $y$ cannot have value smaller than the minimum of its data elements $1/(t_2-t_1), 2/(t_3-t_1), \ldots, R/(t_{R+1}-t_1)$.
Combining this observation with \eqref{eq:mms_ub} and~\eqref{eq:mms_agent_lb} gives
\begin{align*}
    \frac{u_i(A_i)}{\MMS_i} &\geq \frac{u_i(A_i)}{u_i(M_{\geq i})} \cdot (n+1-i) \tag{from \eqref{eq:mms_ub}} \\
    &\geq y \cdot (n+1-i) \tag{from \eqref{eq:mms_agent_lb}} \\
    &\geq \left(\min_{r \in \{2, \ldots, R+1\}} \frac{r-1}{t_{r}-t_1}\right) \cdot (n+1-i) \\
    &= \min_{r \in \{2, \ldots, R+1\}} \frac{(r-1) \cdot (n+1-i)}{t_{r}-i},
\end{align*}
as desired.
\end{proof}

We next show via the following upper bound that the MMS guarantee in \cref{lem:mms_agent_lb} is always tight for agent $n$ and sometimes tight for every other agent.

\begin{lemma}
\label{lem:mms_agent_ub}
Let $n \geq 2$, $m \geq n$, $\pi \in \recbalseq$, and $i \in N$.
Let agent $i$'s picking sequence be $\pi_i = (t_1, \ldots, t_R)$ and define $t_{R+1} = m+1$.
There exists an instance such that under the picking sequence $\pi$, agent $i$ has a positive MMS and gets no more than
\[
    \min_{r \in \{2, \ldots, R+1\}} \, \frac{r-1}{\lfloor (t_r-i)/(n+1-i) \rfloor}
\]
times her MMS.
\end{lemma}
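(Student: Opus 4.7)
The plan is to construct an instance that attains the stated upper bound. Fix $r^* \in \{2,\ldots,R+1\}$ that minimizes the expression, and set $L := t_{r^*}-i$ and $K := \lfloor L/(n+1-i)\rfloor$. Since $\pi$ is recursively balanced and $r^* \geq 2$, agent $i$'s $r^*$-th pick lies in round $r^*$, so $t_{r^*} \geq (r^*-1)n + 1 \geq n+1$, giving $L \geq n+1-i \geq 1$ and $K \geq 1$. The goal is to exhibit an instance in which agent $i$'s MMS is exactly $K$ while her allocated utility is exactly $r^*-1$, matching the claimed ratio $(r^*-1)/K$.

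For the construction, designate $g_i,\ldots,g_{i+L-1}$ as the \emph{unit goods} and $g_{i+L},\ldots,g_m$ as the \emph{filler goods}. Agent $i$ values each of $g_1,\ldots,g_{i-1}$ at $m$, each unit good at $1$, and each filler good at $0$. For each $j<i$, agent $j$ values $g_j$ at $2$, each unit good at $1$, and every other good at $0$. For each $k>i$, agent $k$ values each unit good at $1$ and every other good at $0$. Ties are broken in favor of lower-indexed goods throughout.

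The main step is to track the picking dynamics. In round $1$, each agent $j<i$ picks $g_j$, agent $i$ picks $g_i$ (her best option once $g_1,\ldots,g_{i-1}$ are taken), and each agent $k>i$ picks $g_k$, which is a unit good because $L \geq n+1-i$ implies $k \leq i+L-1$. From round $2$ onward, every agent strictly prefers any remaining unit good to any filler good, so each pick lands on a unit good while some remain. Since only the $i-1$ round-$1$ picks on $g_1,\ldots,g_{i-1}$ are non-units, the $L$ unit goods are exhausted exactly at position $(i-1)+L = t_{r^*}-1$. Hence, agent $i$'s first $r^*-1$ picks yield utility $1$ each and all her later picks yield utility $0$, giving her a total utility of $r^*-1$. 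For the MMS, the partition that isolates each $g_j$ ($j<i$) in its own bundle and distributes the $L$ unit goods as evenly as possible over the remaining $n+1-i$ bundles achieves minimum bundle value $K$; since $m \geq K$, alternative partitions (which group multiple high-value goods or mix units into huge-good bundles) are weakly worse, so $\MMS_i = K$.

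The main subtlety is ensuring that no filler good is picked while any unit good remains; this is enforced by every agent valuing unit goods strictly above fillers ($1$ versus $0$), which drains the unit pool first. The same argument cleanly handles the boundary case $r^* = R+1$, in which $L = m-i+1$ and the filler set is empty.
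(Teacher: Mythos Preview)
Your proof is correct and follows essentially the same approach as the paper: fix the minimizing index $r^*$, construct an instance in which agent $i$'s MMS equals $\lfloor(t_{r^*}-i)/(n+1-i)\rfloor$ while her realized utility is exactly $r^*-1$, so that the ratio matches the claimed bound. The paper's construction differs only cosmetically in the other agents' utilities (each agent $k \neq i$ values only $g_k$ at $1$ and all else at $0$), but the resulting picking dynamics---goods taken in index order---and the MMS computation are identical.
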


\begin{proof}
Choose $s \in \{2, \ldots, R+1\}$ which minimizes $(r-1)/\lfloor (t_r-i)/(n+1-i) \rfloor$.
Consider an instance with the following utilities:
\begin{itemize}
    \item $u_i(g_j) = m$ for each $j \in \{1, \ldots, i-1\}$.
    \item $u_i(g_j) = 1$ for each $j \in \{i, \ldots, t_s-1\}$.
    Note that this set is non-empty, since $s\ge 2$ and so $t_s > n \ge i$.
    \item For each agent $k \in N \setminus \{i\}$, we have $u_k(g_k) = 1$.
    \item $u_\ell(g_j) = 0$ for all other pairs $(\ell, j)$.
\end{itemize}
Agents break ties in favor of lower-index goods.

Observe that agent $i$'s MMS partition is $\{\{g_1\}, \{g_2\}, \ldots, \{g_{i-1}\},  P_i, P_{i+1}, \ldots, P_n\}$, where the goods $g_i, g_{i+1}, \ldots, g_{t_s-1}$ are distributed as evenly as possible among the sets $P_i, P_{i+1}, \ldots, P_n$.
Thus, agent~$i$'s MMS is exactly $\lfloor (t_s-i)/(n+1-i) \rfloor$, which is at least $1$ since $t_s \geq t_2 \geq n+1$.
Now, each good $g_j$ is taken by some agent during the $j$-th overall pick of $\pi$. 
On the other hand, since $t_s$ is the overall position of agent $i$'s $s$-th pick, agent $i$ can only pick $s-1$ goods out of $g_1, \ldots, g_{t_s-1}$. Furthermore, $g_1, \ldots, g_{i-1}$ are taken by agents $1, \ldots, i-1$ respectively. Hence, agent $i$ gets a total utility of $s-1$, which is exactly
\[
    \frac{s-1}{\lfloor (t_s-i)/(n+1-i) \rfloor}
\]
times her MMS.
\end{proof}

The remaining lemmas in this section provide simpler upper and lower bounds for the MMS guarantees of different agents, and will be used in proving our main results in \Cref{sec:MMS-picking}.

\begin{lemma}
\label{lem:mms_agent_lb_large_m}
Let $n \geq 2$, $m \geq n$, $\pi \in \recbalseq$, and $i \in N$.
In the allocation obtained with $\pi$, agent $i$ is guaranteed to get utility at least $(n+1-i)/(2n-i)$ times her MMS.
\end{lemma}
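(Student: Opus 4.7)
The plan is to invoke \cref{lem:mms_agent_lb} directly and show that every term in its minimum is at least $(n+1-i)/(2n-i)$. Since the common factor $n+1-i$ cancels, the task reduces to proving
\[
    t_r - i \;\le\; (r-1)(2n - i) \qquad \text{for every } r \in \{2, \ldots, R+1\}.
\]

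For the interior range $r \in \{2,\ldots, R\}$, I would use the fact that $\pi \in \recbalseq$ is recursively balanced and starts with $(1,2,\ldots,n)$, which forces agent $i$'s $r$-th pick to occur in round $r$. Hence $t_r \le rn$, and the desired inequality becomes $rn - i \le (r-1)(2n-i)$, which simplifies to $(r-2)(n-i) \ge 0$. This is immediate from $r \ge 2$ and $i \le n$.

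The only slightly delicate case is $r = R+1$, where $t_{R+1} = m+1$ is a formal sentinel rather than an actual pick, so the bound $t_r \le rn$ does not apply. Here I would argue as follows: since each of rounds $1,\ldots,\lfloor m/n\rfloor$ is complete and agent $i$ must pick once in every complete round, we have $R \ge \lfloor m/n\rfloor$, and therefore $Rn \ge \lfloor m/n\rfloor \cdot n \ge m - n + 1$. Rearranging gives $m+1 - i \le Rn + n - i$. To finish, note that $(R-1)(n-i) \ge 0$ (since $R \ge 1$ and $i \le n$) is equivalent to $n - i \le R(n-i)$, and adding $Rn$ to both sides yields $Rn + n - i \le Rn + R(n-i) = R(2n-i)$. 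Chaining the two inequalities gives $m + 1 - i \le R(2n-i)$, as required.

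I do not anticipate a real obstacle here: the content of the lemma is already packaged inside \cref{lem:mms_agent_lb}, so this is essentially an exercise in bounding $t_r$ from above using recursive balance. The only mild subtlety worth flagging in the write-up is the sentinel case $r = R+1$, where one must pass through $R \ge \lfloor m/n\rfloor$ rather than a direct round-based bound on $t_r$; everything else is one-line algebra.
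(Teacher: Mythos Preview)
Your proposal is correct and follows essentially the same approach as the paper: invoke \cref{lem:mms_agent_lb}, bound $t_r \le rn$ via recursive balance, and reduce to the elementary inequality $(r-2)(n-i) \ge 0$. The paper streamlines your case split by observing that $t_{R+1} = m+1 \le (R+1)n$ also holds (since $(R+1)n \ge (\lfloor m/n\rfloor +1)n > m$), so the bound $t_r \le rn$ applies uniformly for all $r \in \{2,\ldots,R+1\}$ and the sentinel case needs no separate treatment.
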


\begin{proof}
Let $t_1, \ldots, t_{R+1}$ be defined as in \cref{lem:mms_agent_lb}, that is, agent $i$'s picking sequence under $\pi$ is $\pi_i = (t_1, \ldots, t_R)$ and $t_{R+1} = m+1$.
Since $\pi$ is recursively balanced, $t_r \leq rn$ for each $r \in \{2, \ldots, R\}$. 
Furthermore, $R \geq \lfloor m/n \rfloor$, and so $(R+1)n \geq (\lfloor m/n \rfloor+1)n > (m/n)n = m$. 
Hence, we also have $t_{R+1} = m+1 \leq (R+1)n$.
For each $r \in \{2, \ldots, R+1\}$, it holds that
\begin{align*}
    \frac{r-1}{t_r-i} &\geq \frac{r-1}{rn-i} \tag{since $t_r \leq rn$} \\
    &= \frac{r-1}{rn-n+n-i} \\
    &= \frac{1}{n+\frac{n-i}{r-1}} \\
    &\geq \frac{1}{2n-i}. \tag{since $r \geq 2$}
\end{align*}
It follows that
\begin{align*}
    \min_{r \in \{2, \ldots, R+1\}} \frac{(r-1) \cdot (n+1-i)}{t_r-i} \geq \frac{n+1-i}{2n-i}.
\end{align*}
Hence, by \cref{lem:mms_agent_lb}, agent $i$ is always guaranteed to get utility at least $(n+1-i)/(2n-i)$ times her MMS.
\end{proof}

\begin{lemma}
\label{lem:mms_agent_lb_small_m}
Let $n \geq 2, m \geq n$, $\pi \in \recbalseq$, and $i \in N$.
In the allocation obtained with $\pi$, agent $i$ is guaranteed to get utility at least
\[
    \frac{1}{\lfloor (m+1-i)/(n+1-i) \rfloor}
\]
times her MMS.
\end{lemma}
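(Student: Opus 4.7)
My plan is to sidestep \cref{lem:mms_agent_lb}, which in the small-$m$ regime gives a ratio on the order of $(n+1-i)/(m+1-i)$ that is strictly weaker than the claimed $1/\lfloor (m+1-i)/(n+1-i) \rfloor$; I will instead argue directly via a pigeonhole on an MMS partition of agent~$i$. The key observation is that agent $i$'s first pick already secures utility at least $u_i(g_i)$, so the lemma reduces to establishing the upper bound $\MMS_i \leq T \cdot u_i(g_i)$, where $T = \lfloor (m+1-i)/(n+1-i) \rfloor$.

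First I would fix an instance and, without loss of generality, reindex the goods so that $g_1 \succ_i g_2 \succ_i \cdots \succ_i g_m$. Taking an MMS partition $\{P_1, \ldots, P_n\}$ of agent $i$ with all bundles non-empty and permuting the bundles exactly as in the proof of \cref{lem:mms_agent_lb}, so that $\min\{j : g_j \in P_k\}$ is strictly increasing in $k$, yields $\min\{j : g_j \in P_k\} \geq k$ and hence $P_k \subseteq M_{\geq k}$ for every $k$. In particular, the $n+1-i$ bundles $P_i, P_{i+1}, \ldots, P_n$ are pairwise disjoint subsets of $M_{\geq i}$.

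Next I would invoke pigeonhole on these bundles inside a set of size $m+1-i$: some index $k^* \in \{i, \ldots, n\}$ must satisfy $|P_{k^*}| \leq T$. Because $P_{k^*} \subseteq M_{\geq i}$, every good it contains has value at most $u_i(g_i)$, so $\MMS_i \leq u_i(P_{k^*}) \leq T \cdot u_i(g_i)$. To close the loop, I observe that only $i-1$ goods have been taken before agent $i$'s first pick (the prefix of $\pi$ is $(1, 2, \ldots, n)$), so at least one of $g_1, \ldots, g_i$ remains available and agent $i$ obtains a good of value at least $u_i(g_i)$. Combining these inequalities yields $u_i(A_i) \geq u_i(g_i) \geq \MMS_i/T$, which is exactly the claimed guarantee.

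The only delicate point is recognizing that the ``continuous'' bound $\MMS_i \leq u_i(M_{\geq i})/(n+1-i)$ employed inside \cref{lem:mms_agent_lb} is off by the gap between $(m+1-i)/(n+1-i)$ and its floor; that is precisely why I extract a \emph{single} smallest bundle via pigeonhole rather than averaging over all $n+1-i$ of them.
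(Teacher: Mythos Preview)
Your proposal is correct and essentially identical to the paper's own proof: both reindex the goods by agent~$i$'s preference, sort an MMS partition by each bundle's most-preferred good to obtain $P_i,\ldots,P_n\subseteq M_{\geq i}$, apply pigeonhole to find a bundle of size at most $T=\lfloor (m+1-i)/(n+1-i)\rfloor$, bound that bundle's value by $T\cdot u_i(g_i)$, and finish by noting that agent~$i$'s first pick already secures value $u_i(g_i)$. Your remark about why \cref{lem:mms_agent_lb} is insufficient here is a useful piece of context but is not needed for the argument itself.
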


\begin{proof}
Without loss of generality, we assume that $g_1 \succ_i g_2 \succ_i \cdots \succ_i g_m$.
We first modify the proof of \cref{lem:mms_agent_lb} to show that the MMS of agent $i$ is at most $u_i(g_i) \cdot \lfloor (m+1-i)/(n+1-i) \rfloor$.

Consider any MMS partition $\{P_1, \ldots, P_n\}$.
Without loss of generality, assume that each $P_k$ is non-empty, and $P_1, \ldots, P_n$ are sorted by the most preferred good in $P_k$ according to $\succ_i$, in decreasing order by $\succ_i$.
That is, $\min \{j : g_j \in P_1\} < \min \{j : g_j \in P_2\} < \cdots < \min \{j : g_j \in P_n\}$.
Note that $\min \{j : g_j \in P_k\} \geq k$ for any $k \in N$.
For any $g_j \in P_k$, it holds that $j \geq k$, and so $u_i(g_j) \leq u_i(g_k)$. 
Thus, $u_i(P_k) \leq u_i(g_k) \cdot |P_k|$. In particular, when $k \geq i$, we have $u_i(P_k) \leq u_i(g_i) \cdot |P_k|$.
Note that
\begin{align*}
    \min_{k \in \{i, \ldots, n\}} |P_k| &\leq \left\lfloor \frac{\left|P_i \cup \cdots \cup P_n \right|}{|\{i, \ldots, n\}|} \right\rfloor \\
    &\leq \left\lfloor \frac{m+1-i}{n+1-i} \right\rfloor. \tag{since $\bigcup_{k=i}^n P_k \subseteq \{g_i, \ldots, g_m\}$}
\end{align*}
Therefore, the MMS of agent $i$ is
\begin{align*}
    \MMS_i &\leq \min_{k \in \{i, \ldots, n\}} u_i(P_k) \\
    &\leq \min_{k \in \{i, \ldots, n\}} \left(u_i(g_i) \cdot |P_k|\right) \\
    &= u_i(g_i) \cdot \min_{k \in \{i, \ldots, n\}} |P_k| \\
    &\leq u_i(g_i) \cdot \left\lfloor \frac{m+1-i}{n+1-i} \right\rfloor.
\end{align*}
By definition of $\pi \in \recbalseq$, agent $i$'s bundle has value at least $u_i(g_i)$, which is at least $1/\lfloor (m+1-i)/(n+1-i) \rfloor$ times her MMS.
\end{proof}

\begin{lemma}
\label{lem:mms_agent_n-1_lb_regular}
Let $n \geq 2$, $m \geq n$, and $\pi \in \recbalseq$.
Suppose that $\pi$ satisfies at least one of the following conditions:
\begin{enumerate}
    \item There is only one round, that is, $m=n$.
    \item The second round is incomplete and has an odd number of turns.
    \item Agent $n-1$ gets to pick in the second round.
\end{enumerate}
Then, in the allocation obtained with $\pi$, agent $n-1$ is guaranteed to get at least
\[
    \alpha_{\max} = \min\left\{\frac{\lfloor m/n \rfloor}{\lfloor m/n \rfloor n - n + 1}, \frac{\lceil m/n \rceil}{m-n+1}\right\}
\]
times her MMS.
\end{lemma}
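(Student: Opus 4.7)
My plan is to apply \cref{lem:mms_agent_lb} or \cref{lem:mms_agent_lb_small_m} to agent $n-1$ and verify algebraically that the resulting bound is at least $\alpha_{\max}$. Writing $q=\lfloor m/n\rfloor$ and $k=m-qn\in\{0,\ldots,n-1\}$, we have $m-n+1=(q-1)n+k+1$ and $\lceil m/n\rceil=q$ if $k=0$ and $q+1$ otherwise; thus $\alpha_{\max}=\min\{q/((q-1)n+1),\,(q+1)/((q-1)n+k+1)\}$ when $k\ge 1$, and $\alpha_{\max}=q/((q-1)n+1)$ when $k=0$. I split the proof into two cases based on whether agent $n-1$ picks in round~2.

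\emph{Case I:} agent $n-1$ does not pick in round~2. Since every agent picks in a complete round, this forces the second round to be incomplete and hence $m<2n$ (so $q=1$). Combined with the failure of Condition~3, the hypothesis requires Condition~1 ($k=0$) or Condition~2 ($k$ odd with $1\le k\le n-1$); in either sub-case agent $n-1$'s picking sequence reduces to $(n-1)$. Applying \cref{lem:mms_agent_lb_small_m} then gives a bound of $1/\lfloor(k+2)/2\rfloor$, which evaluates to $1$ when $k=0$ and to $2/(k+1)$ when $k$ is odd. Both match $\alpha_{\max}$, which equals $1$ for $k=0$ and $2/(k+1)$ for $k\ge 1$ when $q=1$.

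\emph{Case II:} agent $n-1$ picks in round~2 (Condition~3). I apply \cref{lem:mms_agent_lb} and verify $2(r-1)/(t_r-(n-1))\ge\alpha_{\max}$ for every $r\in\{2,\ldots,R+1\}$, partitioning such $r$ into three types: complete rounds $r\in\{2,\ldots,q\}$ where $t_r\le rn$; the incomplete round $r=q+1$ (when $k\ge 1$ and $R=q+1$) where $t_{q+1}\le qn+k$; and the virtual round $r=R+1$ where $t_{R+1}=m+1$. For complete rounds, the bound $2(r-1)/((r-1)n+1)$ is increasing in $r$ and is minimized at $r=2$ as $2/(n+1)$; the identity $(q-2)(n-1)\ge 0$ then gives $2/(n+1)\ge q/((q-1)n+1)\ge\alpha_{\max}$. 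For the incomplete round, the bound $2q/((q-1)n+k+1)$ dominates $(q+1)/((q-1)n+k+1)\ge\alpha_{\max}$ because $2q\ge q+1$. For the virtual round, $R\ge q$ gives a bound of at least $2q/((q-1)n+k+2)$, which a short calculation shows exceeds both $\alpha_{\max}$-candidates. The edge case $q=1$ is handled directly: step (i) is vacuous, $R=2$, and the overall minimum reduces to $\min\{2/(k+1),\,4/(k+2)\}=2/(k+1)=\alpha_{\max}$.

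The main obstacle will be the bookkeeping, since $\alpha_{\max}$ is itself the minimum of two competing expressions and the Lemma bound takes three distinct forms across complete, incomplete, and virtual rounds, so several pairwise comparisons are required. Fortunately, the core algebraic identities $(q-2)(n-1)\ge 0$ and $2q\ge q+1$, together with a separate direct computation for $q=1$, keep each verification short.
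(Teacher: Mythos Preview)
Your proof is correct and uses the same underlying tools as the paper (\cref{lem:mms_agent_lb} and \cref{lem:mms_agent_lb_small_m}), but the organization differs. The paper splits primarily by the range of $m$: for $m=n$ it is trivial; for $n+1\le m\le 2n-1$ it sub-splits by Condition~2 versus Condition~3; and for $m\ge 2n$ it invokes \cref{lem:mms_agent_lb_large_m} to get the single bound $2/(n+1)$ for agent $n-1$ and then verifies $2/(n+1)\ge \alpha_{\max}$ via the same identity $(q-2)(n-1)\ge 0$ you use. You instead split primarily by whether agent $n-1$ picks in round~2, and in Case~II you bypass \cref{lem:mms_agent_lb_large_m} entirely, applying \cref{lem:mms_agent_lb} directly with a per-round (complete/incomplete/virtual) analysis. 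Your route trades the one-line appeal to \cref{lem:mms_agent_lb_large_m} for a handful of extra pairwise inequalities, but it is self-contained and yields the same worst-case bound $2/(n+1)$ at $r=2$; the paper's route is shorter for $m\ge 2n$ precisely because that lemma has already packaged the per-round minimization.
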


\begin{proof}
We first consider the case where $m = n$. 
Then, an MMS partition of agent $n-1$ is $\{\{g_1\}, \{g_2\}, \ldots, \{g_n\}\}$. Since agent $n-1$ is guaranteed to get one good, she gets at least her MMS.
On the other hand, note that $\alpha_{\max} = \min\{1, 1\} = 1$.

Next, consider the case where $n+1 \leq m \leq 2n-1$. In this case, $\alpha_{\max} = \min\{1,\, 2/(m-n+1)\} = 2/(m-n+1)$.
Furthermore, assume that $\pi$ satisfies either condition 2 or 3.

Suppose that $\pi$ satisfies condition 2.
Then, $m-n$ is odd.
By \cref{lem:mms_agent_lb_small_m}, agent $n-1$ is guaranteed to get at least $1/\lfloor (m-n+2)/2 \rfloor = 1/((m-n+1)/2) = \alpha_{\max}$ times her MMS.

Suppose now that $\pi$ satisfies condition 3.
Let $\pi_{n-1} = (t_1, \ldots, t_R)$ be the picking sequence of agent $n-1$, and let $t_{R+1} = m+1$.
Since $n+1 \leq m \leq 2n-1$, and agent $n-1$ picks in the second round, we have $R = 2$.
By \cref{lem:mms_agent_lb}, agent $n-1$ is guaranteed to get at least
\begin{align*}
    \min_{r \in \{2, 3\}} \frac{2(r-1)}{t_r-n+1} &= \min\left\{ \frac{2}{t_2-n+1}, \frac{4}{t_3-n+1} \right\} \\
    &\geq \min\left\{ \frac{2}{m-n+1}, \frac{4}{m-n+2} \right\} \tag{since $t_2 \leq m$ and $t_3 = m+1$} \\
    &= \frac{2}{m-n+1} \tag{since $m-n \geq 0$} \\
    &= \alpha_{\max}
\end{align*}
times her MMS.

Lastly, consider the case where $m \geq 2n$.
By \cref{lem:mms_agent_lb_large_m}, agent $i$ is guaranteed to get at least $(n+1-i)/(2n-i)$ times her MMS.
For $i \le n-1$, we have
\begin{align}
    \frac{n+1-i}{2n-i} &= \frac{1}{1+\frac{n-1}{n+1-i}} \notag \\
    &\geq \frac{1}{1+\frac{n-1}{2}} \tag{since $i \leq n-1$} \\
    &= \frac{1}{n - \frac{n-1}{2}} \notag \\
    &\geq \frac{1}{n - \frac{n-1}{\lfloor m/n \rfloor}} \tag{since $\lfloor m/n \rfloor \geq 2$} \\
    &= \frac{\lfloor m/n \rfloor}{\lfloor m/n \rfloor n - n + 1} \notag \\
    &\geq \alpha_{\max}. \label{eq:mms_agent_n-1_lb}
\end{align}
Hence, agent $n-1$ is guaranteed to get at least $\alpha_{\max}$ times her MMS.
\end{proof}

\begin{lemma}
\label{lem:mms_agent_1_to_n-2_lb}
Let $n \geq 2$, $m \geq n$, $\pi \in \recbalseq$, and $i \in N \setminus \{n-1,n\}$.
In the allocation obtained with $\pi$, agent $i$ is guaranteed to get at least 
\[
    \alpha_{\max} = \min\left\{\frac{\lfloor m/n \rfloor}{\lfloor m/n \rfloor n - n + 1}, \frac{\lceil m/n \rceil}{m-n+1}\right\}
\]
times her MMS.
\end{lemma}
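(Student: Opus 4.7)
The plan is to mirror the structure of the proof of Lemma~\ref{lem:mms_agent_n-1_lb_regular} and split the argument into two cases based on~$m$, exploiting the stronger hypothesis $i \leq n-2$ to eliminate the need for any structural condition on~$\pi$ (which was required in Lemma~\ref{lem:mms_agent_n-1_lb_regular}).

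\textbf{Case 1: $n \leq m \leq 2n-1$.} I would apply Lemma~\ref{lem:mms_agent_lb_small_m} to conclude that agent~$i$ receives at least $1/\lfloor (m+1-i)/(n+1-i)\rfloor$ times her MMS. In this regime, $\alpha_{\max} = 1$ when $m=n$ (so the claim is immediate since the floor equals~$1$), and $\alpha_{\max} = 2/(m-n+1)$ when $n+1 \le m \le 2n-1$. For the latter sub-case, it suffices to show that $\lfloor (m+1-i)/(n+1-i) \rfloor \leq (m-n+1)/2$. Setting $k$ to this floor and rearranging $k(n+1-i) \leq m+1-i$ gives $m-n+1 \geq (k-1)(n-i)+k$. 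The hypothesis $i \leq n-2$ provides the slack $n-i \geq 2$, so $m-n+1 \geq 3k-2$, which yields $m-n+1 \geq 2k$ whenever $k \geq 2$; the remaining case $k=1$ is handled directly by $m \geq n+1$.

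\textbf{Case 2: $m \geq 2n$.} I would invoke Lemma~\ref{lem:mms_agent_lb_large_m} to guarantee a bound of $(n+1-i)/(2n-i)$ times agent $i$'s MMS. Since $i \leq n-2$ trivially satisfies $i \leq n-1$, the chain of inequalities culminating in~\eqref{eq:mms_agent_n-1_lb} from the proof of Lemma~\ref{lem:mms_agent_n-1_lb_regular} applies verbatim and already gives $(n+1-i)/(2n-i) \geq \alpha_{\max}$.

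The main obstacle is the verification in Case~1, where the key point is precisely that the stronger hypothesis $i \leq n-2$ (equivalently $n+1-i \geq 3$) creates enough slack in the denominator of the bound from Lemma~\ref{lem:mms_agent_lb_small_m} so that the floor is dominated by $(m-n+1)/2$. This is exactly the slack that is unavailable for agent $n-1$, which is why Lemma~\ref{lem:mms_agent_n-1_lb_regular} had to impose one of its three structural conditions on~$\pi$; for agents $1,\ldots,n-2$, no such condition is needed, and Case~2 contributes no new difficulty beyond a reuse of the earlier computation.
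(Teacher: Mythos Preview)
Your proposal is correct and follows essentially the same approach as the paper: both split into the cases $m \ge 2n$ (invoking Lemma~\ref{lem:mms_agent_lb_large_m} and reusing inequality~\eqref{eq:mms_agent_n-1_lb}) and $m \le 2n-1$ (invoking Lemma~\ref{lem:mms_agent_lb_small_m}). The only difference is cosmetic algebra in the small-$m$ case: the paper further splits into $n \le m \le n+2$ (where the floor is shown to equal~$1$) and $n+3 \le m \le 2n-1$ (where it bounds $(n+1-i)/(m+1-i) \ge 3/(m-n+3) \ge 2/(m-n+1)$ without the floor), whereas you bound the floor $k$ directly via $m-n+1 \ge (k-1)(n-i)+k \ge 3k-2$; both routes rely on exactly the same slack $n-i \ge 2$.
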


\begin{proof}
Suppose first that $m \geq 2n$. 
By \cref{lem:mms_agent_lb_large_m}, agent~$i$ is guaranteed to get at least $(n+1-i)/(2n-i)$ times her MMS. Moreover, by \eqref{eq:mms_agent_n-1_lb}, $(n+1-i)/(2n-i) \geq \alpha_{\max}$ when $i \leq n-1$ and $m \geq 2n$. Hence, agent $i$ is guaranteed to get at least $\alpha_{\max}$ times her MMS.

We can now assume that $m \leq 2n-1$. 
By \cref{lem:mms_agent_lb_small_m}, agent $i$ is guaranteed to get at least $1/\lfloor (m+1-i)/(n+1-i) \rfloor$ times her MMS.
Hence, it suffices to prove that $1/\lfloor (m+1-i)/(n+1-i) \rfloor \geq \alpha_{\max}$.

Consider the case where $n+3 \leq m \leq 2n-1$. Then, $\alpha_{\max} = \min\{1,\, 2/(m-n+1)\} = 2/(m-n+1)$.
Note that 
\begin{align*}
    \frac{1}{\left\lfloor \frac{m+1-i}{n+1-i} \right\rfloor} &\geq \frac{n+1-i}{m+1-i} \\
    &= \frac{1}{\frac{m-n}{n+1-i}+1} \\
    &\geq \frac{1}{\frac{m-n}{3}+1} \tag{since $i \leq n-2$} \\
    &= \frac{3}{m-n+3}.
\end{align*}
Furthermore, $3(m-n+1) - 2(m-n+3) = m-n-3 \geq 0$ by the assumption on $m$.
Thus, $3/(m-n+3) \geq 2/(m-n+1) = \alpha_{\max}$, which completes the proof for this case.

Lastly, consider the case where $n \leq m \leq n+2$. Note that
\begin{align*}
    m+1-i &= (m-n)+(n+1-i) \\
    &\leq 2+(n+1-i) \tag{by assumption} \\
    &< 2(n+1-i). \tag{since $2 < n+1-i$}
\end{align*}
Thus, $\lfloor (m+1-i)/(n+1-i) \rfloor = 1$. On the other hand, 
\begin{align*}
    \alpha_{\max} &\leq \frac{\lfloor m/n \rfloor}{\lfloor m/n \rfloor n-n+1} \\
    &= \frac{1}{n-\frac{n-1}{\lfloor m/n \rfloor}} \\
    &\leq 1 \tag{since $\lfloor m/n \rfloor \geq 1$} \\
    &= \frac{1}{\lfloor (m+1-i)/(n+1-i) \rfloor}.
\end{align*}
This completes the proof.
\end{proof}

\begin{lemma}
\label{lem:mms_agent_n_ub}
Let $n \geq 2$, $m \geq n$, and $\pi \in \recbalseq$.
There exists an instance such that under the picking sequence $\pi$, agent $n$ has a positive MMS and gets no more than
\[
    \alpha_{\max} = \min\left\{\frac{\lfloor m/n \rfloor}{\lfloor m/n \rfloor n - n + 1}, \frac{\lceil m/n \rceil}{m-n+1}\right\}
\]
times her MMS.
\end{lemma}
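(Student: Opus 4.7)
The plan is to apply \cref{lem:mms_agent_ub} with $i = n$. Since $n + 1 - i = 1$, the floor collapses and the lemma produces an instance where agent $n$ obtains at most $\min_{r \in \{2, \ldots, R+1\}}(r-1)/(t_r - n)$ times her (positive) MMS, so it suffices to show this minimum is at most $\alpha_{\max}$ for every $\pi \in \recbalseq$. Write $m = kn + s$ with $k = \lfloor m/n \rfloor$ and $s \in \{0, 1, \ldots, n-1\}$, and let $A = k/(kn - n + 1)$ and $B = \lceil m/n \rceil/(m - n + 1)$, so that $\alpha_{\max} = \min\{A, B\}$. I split into cases based on $R$, the number of picks of agent $n$.

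If $R = k$ (which occurs either when $s = 0$, or when $s \geq 1$ and agent $n$ is absent from the incomplete last round), then $t_{R+1} = m + 1$. Taking $r = R + 1$ gives value $k/(m - n + 1)$; this is at most $A$ because $m - n + 1 \geq kn - n + 1$, and at most $B$ because $k \leq \lceil m/n \rceil$, so it is at most $\alpha_{\max}$.

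The remaining case is $R = k + 1$, which forces $s \geq 1$ and $t_R \in \{kn+1, \ldots, m\}$ by recursive balancedness. Here no single $r$ suffices in general, so I combine two choices: $r = R$ yields $k/(t_R - n) \leq k/(kn - n + 1) = A$, using $t_R \geq kn + 1$, while $r = R + 1$ yields $(k+1)/(m - n + 1) = B$. Taking the minimum of these two values gives a quantity at most $\min\{A, B\} = \alpha_{\max}$. The main delicate point is this last case, where the bound requires combining the contributions from $r = R$ and $r = R+1$; the inequality $t_R \geq kn + 1$, forced by recursive balancedness, is precisely what makes $r = R$ attain the bound $A$, and without this pairing neither choice alone would cover both regimes $A \leq B$ and $A > B$.
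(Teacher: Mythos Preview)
Your proof is correct and takes a different route from the paper. The paper does not invoke \cref{lem:mms_agent_ub} at all; instead it constructs two explicit instances directly. In the first, agent~$n$ values $g_n,\dots,g_{\lfloor m/n\rfloor n}$ at~$1$ (and $g_1,\dots,g_{n-1}$ at~$m$), yielding MMS $\lfloor m/n\rfloor n-n+1$ and utility exactly $\lfloor m/n\rfloor$, hence ratio exactly $A$. In the second, agent~$n$ values all of $g_n,\dots,g_m$ at~$1$, yielding MMS $m-n+1$ and utility at most $\lceil m/n\rceil$, hence ratio at most $B$. One of these two instances then witnesses~$\alpha_{\max}$. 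Your approach is more modular: you reuse the general \cref{lem:mms_agent_ub}, observe that for $i=n$ the floor disappears, and then reduce the task to the pure inequality $\min_{r}(r-1)/(t_r-n)\le\alpha_{\max}$, which you dispatch by a clean case split on whether agent~$n$ picks in the incomplete final round. The paper's approach has the advantage that each instance transparently corresponds to one term of $\alpha_{\max}$ (and indeed the first instance attains $A$ \emph{exactly}, regardless of $\pi$), whereas your approach avoids repeating a construction already encapsulated in \cref{lem:mms_agent_ub} and makes clear why recursive balancedness (via $t_R\ge kn+1$ when $R=k+1$) is exactly what is needed.
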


\begin{proof}
Consider first the instance~$\mathcal{I}_1 \in \mathcal{I}_{n, m}$ with the following utilities:
\begin{itemize}
    \item For $i \in N \setminus \{n\}$, $u_i(g_i) = m$ and $u_i(g_j) = 0$ if $j \neq i$.
    \item $u_n(g_j) = m$ if $j < n$.
    \item $u_n(g_j) = 1$ if $n \leq j \leq \lfloor m/n \rfloor n$.
    \item $u_n(g_j) = 0$ if $j > \lfloor m/n \rfloor n$.
\end{itemize}
Agents break ties in favor of lower-index goods.

We observe that agent~$n$'s MMS is exactly $u_n(\{g_n, \ldots, g_m\}) = \allowbreak \lfloor m/n \rfloor n - n + 1 \geq 1$, as obtained by the partition $\{\{g_1\}, \{g_2\}, \ldots, \{g_{n-1}\}, \{g_n, g_{n+1}, \ldots, g_{m}\}\}$. 
In each round $r \in \{1, \ldots, \lfloor m/n \rfloor\}$, agent $n$ gets a good with utility $1$. 
If there is an incomplete last round $r = \lfloor m/n \rfloor + 1$, either agent $n$ does not get to pick at all, or she picks a good with utility $0$. In both cases, agent $n$'s total utility is $\lfloor m/n \rfloor$. 
Hence, agent~$n$ gets exactly $\lfloor m/n \rfloor/(\lfloor m/n \rfloor n - n + 1)$ times her MMS.

Now, consider a different instance~$\mathcal{I}_2 \in \mathcal{I}_{n, m}$ with the following utilities:
\begin{itemize}
    \item For $i \in N \setminus \{n\}$, $u_i(g_i) = m$ and $u_i(g_j) = 0$ if $j \neq i$.
    \item $u_n(g_j) = m$ if $j < n$.
    \item $u_n(g_j) = 1$ if $n \leq j \leq m$.
\end{itemize}
Again, agents break ties in favor of lower-index goods.

In this instance, agent $n$'s MMS is exactly $u_n(\{g_n, \ldots, g_m\}) = m - n + 1 \geq 1$, as obtained by the~partition $\{\{g_1\}, \{g_2\}, \ldots, \{g_{n-1}\}, \{g_n, g_{n+1}, \ldots, g_{m}\}\}$. Moreover, in each round, agent $n$ gets a good with utility at most $1$. 
Hence, agent $n$'s total utility is at most $\lceil m/n \rceil$.
It follows that agent~$n$ gets at most $\lceil m/n \rceil/(m - n + 1)$ times her MMS.

Therefore, for one of the instances $\mathcal{I}_1$ and $\mathcal{I}_2$, agent $n$ gets no more than \[
    \alpha_{\max} = \min\left\{\frac{\lfloor m/n \rfloor}{\lfloor m/n \rfloor n - n + 1}, \frac{\lceil m/n \rceil}{m-n+1}\right\}
\] times her MMS.
\end{proof}

\subsection{Maximin Share Guarantees for Recursively Balanced Picking Sequences}
\label{sec:MMS-picking}

We are now ready to state and prove our main results on MMS.
As mentioned earlier, for most sequences, the MMS guarantee depends only on the picking sequence of agent $n$.
However, there are some exceptions where agent $n-1$ has a lower MMS guarantee.
Hence, before we state the MMS guarantees, we distinguish between the general case and the exception by defining ``irregular'' picking sequences.

\begin{definition}
Let $n \geq 2$, $m \geq n$, and $\pi \in \recbalseq$.
A picking sequence $\pi$ is called \emph{irregular} if it has a second round that satisfies the following conditions:
\begin{enumerate}
    \item The second round has a positive even number of turns.
    \item Agent $n-1$ does not get to pick in the second round.
    \item Agent $n$ picks in the first $(m-n)/2$ turns of the second round.
\end{enumerate}
Otherwise, we say that $\pi$ is \emph{regular}.
\end{definition}

Observe that a picking sequence $\pi \in \recbalseq$ can be irregular only if $n+2 \leq m \leq 2n-1$. 
Indeed, if $m \leq n+1$, then the second round either does not exist or has an odd number of turns. 
If $m \geq 2n$, agent $n-1$ has to pick in the second round as $\pi$ is recursively balanced. 

We begin by characterizing the MMS guarantees for regular picking sequences.
Note that these guarantees depend only on the picking sequence $\pi_n$ of agent $n$.

\begin{restatable}{theorem}{mmsRegular}
\label{thm:mms_regular}
Let $n \geq 2$, $m \geq n$, and $\pi \in \recbalseq$. Suppose that $\pi$ is regular.
Let agent $n$'s picking sequence in $\pi$ be $\pi_n = (t_1, \ldots, t_R)$ where $t_1 = n$, and define $t_{R+1} = m+1$.
Let 
\[
    \alpha = \min_{r \in \{2, \ldots, R+1\}} \frac{r-1}{t_r-n}.
\]
Then, the following hold for the allocation obtained with~$\pi$:
\begin{enumerate}[(a)]
    \item Every agent is always guaranteed to get at least $\alpha$ times her MMS.
    \item There exists an instance~$\mathcal{I} \in \mathcal{I}_{n, m}$ where agent $n$ has a positive MMS and gets no more than $\alpha$ times her MMS.
\end{enumerate}
\end{restatable}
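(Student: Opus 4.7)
The plan is to deduce both parts from the per-agent MMS lemmas in Section 4.1, using the regularity hypothesis only to handle agent $n-1$.

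For part (a), applying Lemma \ref{lem:mms_agent_lb} to agent $i=n$ gives exactly the bound $\alpha$, since the factor $n+1-i$ equals $1$. For the remaining agents, I prove the uniform inequality $\alpha \leq \alpha_{\max}$ by testing two choices of $r$ in the minimum defining $\alpha$. The term $r = R+1$ (where $t_{R+1} = m+1$) gives $\alpha \leq R/(m-n+1) \leq \lceil m/n\rceil/(m-n+1)$. The term $r = \lfloor m/n\rfloor + 1$ (which equals $R+1$ when $R = \lfloor m/n\rfloor$ and is a genuine pick position otherwise), combined with the recursive-balance lower bound $t_{\lfloor m/n\rfloor+1} \geq \lfloor m/n\rfloor n + 1$, gives $\alpha \leq \lfloor m/n\rfloor/(\lfloor m/n\rfloor n - n + 1)$. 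Together these yield $\alpha \leq \alpha_{\max}$, so agents $1, \ldots, n-2$ are guaranteed at least $\alpha$ by Lemma \ref{lem:mms_agent_1_to_n-2_lb}, and agent $n-1$ is guaranteed at least $\alpha$ by Lemma \ref{lem:mms_agent_n-1_lb_regular} whenever one of its three hypotheses holds.

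The main obstacle is the residual case in which $\pi$ is regular but none of the three hypotheses of Lemma \ref{lem:mms_agent_n-1_lb_regular} is satisfied. Unpacking the definition of regularity shows this forces $n+2 \leq m \leq 2n-2$, with the second round having $m-n$ (positive and even) turns, agent $n-1$ absent from it, and agent $n$ either skipping the second round or picking at some position $p \geq (m-n)/2 + 1$ within it. In this case agent $n-1$'s picking sequence is just $(n-1)$ (since she does not pick in round~2, and there is no round~3 as $m < 2n$), so Lemma \ref{lem:mms_agent_lb} gives her the guarantee $2/(m-n+2)$. I then compare this directly to $\alpha$ in each subcase: if agent $n$ skips round~2, then $\alpha = 1/(m-n+1)$ and $2/(m-n+2) \geq 1/(m-n+1)$ reduces to $m \geq n$; if agent $n$ picks at position $p$, then $\alpha \leq 1/p$ and $2/(m-n+2) \geq 1/p$ reduces to $p \geq (m-n)/2 + 1$, which is exactly what the failure of condition~3 of irregularity provides.

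For part (b), I invoke Lemma \ref{lem:mms_agent_ub} with $i = n$: since $n+1-i = 1$, the floor $\lfloor (t_r - n)/(n+1-n)\rfloor$ simplifies to $t_r - n$, so the constructed instance gives agent $n$ a positive MMS while letting her receive no more than $\min_{r \in \{2, \ldots, R+1\}} (r-1)/(t_r - n) = \alpha$ times her MMS. This is exactly the upper bound claimed.
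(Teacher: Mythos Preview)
Your proof is correct and follows essentially the same route as the paper's. Two small remarks. First, your claim that the residual case forces $m \le 2n-2$ is slightly off: when $n$ is odd, $m = 2n-1$ is possible (the second round then has $n-1$ turns, all agents except $n-1$ pick, and $m-n = n-1$ is even). This is harmless, since you only use $m < 2n$ to conclude there is no third round. Second, your derivation of $\alpha \le \alpha_{\max}$ differs from the paper's: you bound $\alpha$ directly by plugging in $r = R+1$ and $r = \lfloor m/n\rfloor + 1$, whereas the paper obtains the same inequality indirectly by combining \cref{lem:mms_agent_lb} (agent~$n$ always gets at least $\alpha$) with \cref{lem:mms_agent_n_ub} (in some instance agent~$n$ gets at most $\alpha_{\max}$). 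Your argument is more self-contained; the paper's reuses an existing lemma. For agent $n-1$ in the residual case, the paper invokes \cref{lem:mms_agent_lb_small_m} rather than \cref{lem:mms_agent_lb}, but both yield the same $2/(m-n+2)$ since $m-n$ is even.
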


\begin{proof}
\cref{thm:mms_regular}(b) follows directly from \cref{lem:mms_agent_ub} with $i = n$.
Therefore, it remains to prove \cref{thm:mms_regular}(a).

By \cref{lem:mms_agent_lb}, agent $n$ is always guaranteed to get at least $\alpha$ times her MMS.
Hence, by \cref{lem:mms_agent_n_ub}, it holds that $\alpha \leq \alpha_{\max}$, where $\alpha_{\max}$ is the MMS guarantee of any agent $i \in N \setminus \{n-1, n\}$ according to \cref{lem:mms_agent_1_to_n-2_lb}.
This means that every agent $i \in N \setminus \{n-1, n\}$ is also guaranteed to get at least $\alpha$ times her MMS. 

It remains to prove the statement for agent $n-1$.
If $\pi$ satisfies at least one of the conditions in \cref{lem:mms_agent_n-1_lb_regular}, then agent $n-1$ is guaranteed to get at least $\alpha_{\max} \geq \alpha$ times her MMS.
Otherwise, the following three conditions must be satisfied:
\begin{enumerate}
    \item There is a second round, that is, $m \geq n+1$.
    \item The second round has an even number of turns.
    \item Agent $n-1$ does not get to pick in the second round.
\end{enumerate}
In particular, by condition~3, the second round must be incomplete.
Hence, $\pi$ satisfies conditions 1 and 2 of irregular picking sequences.
Since $\pi$ is regular, it must not satisfy condition 3, that is, agent $n$ does not pick in the first $(m-n)/2$ turns of the second round.
Let $t_2$ be defined as in the theorem statement: $t_2$ is the index of agent $n$'s second pick if she gets to pick in the second round, and $t_2 = m + 1$ otherwise.
In either case, $t_2 > n + (m-n)/2$.
Since the second round has an even number of turns, $(m-n)/2$ is an integer, and so $t_2 \geq n + (m-n)/2 + 1$.
This implies that $\alpha \leq 1/(t_2-n) \leq 1/((m-n)/2+1) = 2/(m-n+2)$.
On the other hand, by \cref{lem:mms_agent_lb_small_m}, agent $n-1$ is guaranteed to get at least $1/\lfloor (m-n+2)/2 \rfloor \geq 2/(m-n+2) \geq \alpha$ times her MMS, completing the proof.
\end{proof}

We next show that the MMS guarantee in \Cref{thm:mms_regular}(a) does not necessarily apply to irregular picking sequences.

\begin{example}
\label{ex:irregular-fail}
Let $n = 3$, $m = 5$, and $\pi = (1, 2, 3 \mid 3, 1)$.
Note that $\pi$ is irregular.
Consider the following instance:
\begin{center} 
    \begin{tabular}{c|ccccc}
        $g$      & $g_1$ & $g_2$ & $g_3$ & $g_4$ & $g_5$ \\
        \hline
        $u_1(g)$ & $6$ & $0$ & $0$ & $0$ & $0$ \\
        $u_2(g)$ & $2$ & $1$ & $1$ & $1$ & $1$ \\
        $u_3(g)$ & $6$ & $0$ & $0$ & $0$ & $0$ \\
    \end{tabular}
\end{center}
Agents break ties in favor of lower-index goods.

Note that agent~$2$ receives exactly one good.
Furthermore, this good cannot be $g_1$ since $g_1$ is picked by agent~$1$ in her first pick.
Hence, agent~$2$ gets a utility of $1$.
On the other hand, agent $2$ has an MMS  of $2$ by the partition $\{\{g_1\}, \{g_2, g_3\}, \{g_4, g_5\}\}$.
Thus, agent~$2$ gets exactly $1/2$ of her MMS.

We compute $\alpha$ as defined in \cref{thm:mms_regular}.
The picking sequence of agent $3$ in $\pi$ is $\pi_3 = (3, 4)$.
In particular, we have $R = 2$, $t_2 = 4$, and $t_3 = 6$, 
which gives $\alpha = \min_{r \in \{2, 3\}} (r-1)/(t_r-n) = \min\{1, 2/3\} = 2/3$.
Since agent $2$ only gets $1/2 < \alpha$ times her MMS, this shows that the guarantee in \cref{thm:mms_regular}(a) does not necessarily apply to irregular picking sequences.
\end{example}

In light of \Cref{ex:irregular-fail}, we separately determine the MMS guarantees for irregular picking sequences.

\begin{restatable}{theorem}{mmsIrregular}
\label{thm:mms_irregular}
Let $n \geq 2$, $m \geq n$, and $\pi \in \recbalseq$.
Suppose that $\pi$ is irregular.
Then, the following hold for the allocation obtained with~$\pi$:
\begin{enumerate}[(a)]
    \item Every agent is always guaranteed to get at least $2/(m-n+2)$ times her MMS.
    \item There exists an instance~$\mathcal{I} \in \mathcal{I}_{n, m}$ where agent $n-1$ has a positive MMS and gets no more than $2/(m-n+2)$ times her MMS.
\end{enumerate}
\end{restatable}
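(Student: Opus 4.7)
Both parts follow by applying the individual-agent bounds from Section~\ref{subsec:mms_agents}, once one unpacks the structure forced by irregularity. The first observation to make is that $\pi$ can be irregular only when $n + 2 \le m \le 2n - 1$ and $m - n$ is (positive and) even; in particular there is no third round, so agent $n-1$'s picking sequence has $R = 1$ with $t_1 = n-1$ and $t_2 = m+1$, while agent $n$'s picking sequence has $R = 2$ with $t_1 = n$, $t_2 \le n + (m-n)/2$, and $t_3 = m+1$.

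Part~(b) is then immediate from Lemma~\ref{lem:mms_agent_ub} applied to $i = n-1$: the only term inside the minimum is $1/\lfloor (m-n+2)/2 \rfloor$, which equals $2/(m-n+2)$ since $m - n$ is even.

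For part~(a), I would split on the identity of the agent. Agents in $\{1, \dots, n-2\}$ are handled by Lemma~\ref{lem:mms_agent_1_to_n-2_lb}; in the range $n+2 \le m \le 2n-1$ we have $\lfloor m/n\rfloor = 1$, $\lceil m/n\rceil = 2$ and $m-n+1 \ge 3$, so $\alpha_{\max}$ simplifies to $2/(m-n+1) \ge 2/(m-n+2)$. Agent $n-1$ is handled directly by Lemma~\ref{lem:mms_agent_lb} with the $(t_r)$ above, which yields exactly $2/(m-n+2)$. Agent $n$ is handled by Lemma~\ref{lem:mms_agent_lb}, giving
\[
    \min\!\left\{\frac{1}{t_2-n},\ \frac{2}{m-n+1}\right\} \ge \frac{2}{m-n+1} \ge \frac{2}{m-n+2},
\]
where the first inequality uses $t_2 - n \le (m-n)/2$.

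There is no real obstacle; the main care-point is bookkeeping of which irregularity condition is invoked where. Condition~1 makes $(m-n)/2$ an integer and ensures the floor in part~(b) collapses cleanly; condition~2 forces $R = 1$ for agent $n-1$; condition~3 provides the bound $t_2 \le n + (m-n)/2$ for agent $n$ in part~(a). After these observations, each case is essentially a one-line invocation of a lemma from Section~\ref{subsec:mms_agents}.
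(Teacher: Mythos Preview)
Your proposal is correct and follows essentially the same approach as the paper: unpack the irregularity constraints to pin down $n+2\le m\le 2n-1$, $m-n$ even, and the exact picking sequences of agents $n-1$ and $n$, then invoke the individual-agent lemmas of Section~\ref{subsec:mms_agents} case by case. The only cosmetic difference is that for agent $n-1$ in part~(a) the paper cites Lemma~\ref{lem:mms_agent_lb_small_m} rather than Lemma~\ref{lem:mms_agent_lb}, but both yield the same bound $2/(m-n+2)$ here.
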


\begin{proof}
We first prove statement (a).
Since $\pi$ is irregular, we have $n+2 \leq m \leq 2n-1$.
By \cref{lem:mms_agent_1_to_n-2_lb}, each agent $i \in N \setminus \{n-1, n\}$ is guaranteed to get at least $\alpha_{\max} = \min\{1, 2/(m-n+1)\} > 2/(m-n+2)$ times her MMS.
Furthermore, by \cref{lem:mms_agent_lb_small_m}, agent $n-1$ is guaranteed to get at least $1/\lfloor (m-n+2)/2 \rfloor \geq 2/(m-n+2)$ times her MMS.
To prove the same for agent $n$, consider the picking sequence $\pi_n = (t_1, t_2)$ of agent $n$, and let $t_3 = m+1$. Since $\pi$ is irregular, $t_2 \leq n + (m-n)/2$.
By \cref{lem:mms_agent_lb}, agent $n$ is guaranteed to get at least $\min\{1/(t_2-n), 2/(t_3-n)\} \ge \min\{2/(m-n), 2/(m-n+1)\} > 2/(m-n+2)$ times her MMS. This completes the proof for statement (a).

We now prove statement (b).
Since $\pi$ is irregular, agent $n-1$ only picks once and $m-n$ is even.
Consider the picking sequence $\pi_{n-1} = (n-1)$ of agent $n-1$.
By \cref{lem:mms_agent_ub}, there is an instance where agent $n-1$ has a positive MMS and gets no more than $1/\lfloor (m-n+2)/2 \rfloor$ times her MMS.
Since $m-n$ is even, $1/\lfloor (m-n+2)/2 \rfloor = 2/(m-n+2)$.
This completes the proof.
\end{proof}

\Cref{thm:mms_regular,thm:mms_irregular} allow us to determine the best recursively balanced picking sequences with respect to MMS.

\begin{restatable}[Best Picking Sequences]{theorem}{mmsBest}
\label{thm:mms_best}
Let $n \geq 2$ and $m \geq n$.
Denote
\[
    \alpha_{\max} = \min\left\{\frac{\lfloor m/n \rfloor}{\lfloor m/n \rfloor n - n + 1}, \frac{\lceil m/n \rceil}{m-n+1}\right\}.
\]
Then, the following statements hold:
\begin{enumerate}[(a)]
    \item For every picking sequence $\pi \in \recbalseq$, there exists an instance~$\mathcal{I} \in \mathcal{I}_{n, m}$ such that some agent with a positive MMS gets no more than $\alpha_{\max}$ times her MMS in the allocation obtained with~$\pi$.
    \item There exists a picking sequence $\pi \in \recbalseq$ such that every agent is always guaranteed to get at least $\alpha_{\max}$ times her MMS in the allocation obtained with~$\pi$.
    \item For any picking sequence $\pi \in \recbalseq$, every agent is always guaranteed to get at least $\alpha_{\max}$ times her MMS in the allocation obtained with~$\pi$ if and only if $\pi$ satisfies the following:
    \begin{enumerate}[(i)]
        \item $\pi$ is regular, and
        \item for each $r \in \{2, \ldots, \lceil m/n \rceil\}$, agent $n$ gets to pick in round $r$ and the index of her $r$-th pick is at most $n + (r-1)/\alpha_{\max}$. 
    \end{enumerate}
\end{enumerate}
\end{restatable}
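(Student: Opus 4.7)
Part~(a) is immediate from \cref{lem:mms_agent_n_ub}, which already exhibits, for any $\pi \in \recbalseq$, an instance where agent~$n$ has a positive MMS yet receives at most $\alpha_{\max}$ times it.

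For part~(b), I would exhibit a specific $\pi^* \in \recbalseq$ for which $\alpha = \alpha_{\max}$ in the notation of \cref{thm:mms_regular}, so that part~(a) of that theorem yields the claimed guarantee. Take round~1 of $\pi^*$ to be $(1, 2, \ldots, n)$, round~2 to be the appropriate prefix of $(n, n-1, \ldots, 1)$, and each round $r \geq 3$ to be the appropriate prefix of $(n, 1, 2, \ldots, n-1)$. Recursive balancedness follows directly from this round structure. Regularity needs a brief case check: when $m = n+1$ the second round has an odd number of turns, and when $n+2 \leq m \leq 2n-1$ agent~$n-1$ picks second in the second round, so condition~1 or~2 of irregularity always fails. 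Under $\pi^*$, agent~$n$ picks at $t_1 = n$ and $t_r = (r-1)n+1$ for $r \geq 2$; a short cross-multiplication shows that $(r-1)/((r-2)n+1)$ is strictly decreasing in~$r$ for $n \geq 2$, so the minimum defining $\alpha$ is attained at $r = R$ or $r = R+1$, and a case split on whether $n \mid m$ matches $\alpha$ exactly to $\alpha_{\max}$.

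For part~(c), I would prove both directions using \cref{thm:mms_regular,thm:mms_irregular}. For sufficiency, condition~(ii) forces $R = \lceil m/n \rceil$ and rewrites as $(r-1)/(t_r-n) \geq \alpha_{\max}$ for all $r \in \{2, \ldots, R\}$; the remaining $r = R+1$ term equals $\lceil m/n \rceil/(m-n+1) \geq \alpha_{\max}$ by definition of $\alpha_{\max}$, so $\alpha \geq \alpha_{\max}$ and \cref{thm:mms_regular}(a) applies. For necessity, $\pi$ cannot be irregular, since irregular sequences exist only for $n+2 \leq m \leq 2n-1$ with $m-n$ even, in which case $\alpha_{\max} = 2/(m-n+1) > 2/(m-n+2)$, contradicting \cref{thm:mms_irregular}(b); hence $\pi$ is regular, and \cref{thm:mms_regular}(b) forces $\alpha \geq \alpha_{\max}$. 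If agent~$n$ skipped the last (incomplete) round, then $R = \lfloor m/n \rfloor$ and the $r = R+1$ term of $\alpha$ would be $\lfloor m/n \rfloor/(m-n+1)$, which falls strictly below both expressions in $\alpha_{\max}$ (using $\lfloor m/n \rfloor n - n + 1 < m - n + 1$ whenever $n \nmid m$), a contradiction. Once $R = \lceil m/n \rceil$ is established, each inequality $(r-1)/(t_r - n) \geq \alpha_{\max}$ rearranges exactly to the bound on $t_r$ in condition~(ii).

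The main obstacle I anticipate is the constructive side of~(b): the natural sequence in which agent~$n$ picks first in every round after the first turns out to be \emph{irregular} whenever $n+2 \leq m \leq 2n-1$ with $m-n$ even, which is precisely why $\pi^*$ reverses the order in the second round. After spotting this subtlety, the remaining verifications in all three parts reduce to routine arithmetic.
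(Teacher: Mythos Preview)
Your proposal is correct and follows essentially the same route as the paper: part~(a) via \cref{lem:mms_agent_n_ub}, part~(c) via \cref{thm:mms_regular,thm:mms_irregular} with the same case analysis, and part~(b) via a concrete regular sequence with $t_r=(r-1)n+1$. The only cosmetic difference is in~(b): the paper reverses the order in \emph{every} round after the first (not just the second), but since agent~$n$'s positions are identical in both constructions, the verification that $\alpha=\alpha_{\max}$ is the same; the paper bounds each term directly by $\lfloor m/n\rfloor/(\lfloor m/n\rfloor n-n+1)$ and invokes~(c), whereas you use monotonicity of $(r-1)/((r-2)n+1)$ and invoke \cref{thm:mms_regular}(a)---both are equally short.
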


\begin{proof}
We first prove statement (a).

Consider first the instance~$\mathcal{I}_1 \in \mathcal{I}_{n, m}$ with the following utilities:
\begin{itemize}
    \item For $i \in N \setminus \{n\}$, $u_i(g_i) = m$ and $u_i(g_j) = 0$ if $j \neq i$.
    \item $u_n(g_j) = m$ if $j < n$.
    \item $u_n(g_j) = 1$ if $n \leq j \leq \lfloor m/n \rfloor n$.
    \item $u_n(g_j) = 0$ if $j > \lfloor m/n \rfloor n$.
\end{itemize}
Agents break ties in favor of lower-index goods.

We observe that agent~$n$'s MMS is exactly $u_n(\{g_n, \ldots, g_m\}) = \allowbreak \lfloor m/n \rfloor n - n + 1 \geq 1$, as obtained by the partition $\{\{g_1\}, \{g_2\}, \ldots, \{g_{n-1}\}, \{g_n, g_{n+1}, \ldots, g_{m}\}\}$. 
In each round $r \in \{1, \ldots, \lfloor m/n \rfloor\}$, agent $n$ gets a good with utility $1$. 
If there is an incomplete last round $r = \lfloor m/n \rfloor + 1$, either agent $n$ does not get to pick at all, or she picks a good with utility $0$. In both cases, agent $n$'s total utility is $\lfloor m/n \rfloor$. 
Hence, agent~$n$ gets exactly $\lfloor m/n \rfloor/(\lfloor m/n \rfloor n - n + 1)$ times her MMS.

Now, consider a different instance~$\mathcal{I}_2 \in \mathcal{I}_{n, m}$ with the following utilities:
\begin{itemize}
    \item For $i \in N \setminus \{n\}$, $u_i(g_i) = m$ and $u_i(g_j) = 0$ if $j \neq i$.
    \item $u_n(g_j) = m$ if $j < n$.
    \item $u_n(g_j) = 1$ if $n \leq j \leq m$.
\end{itemize}
Again, agents break ties in favor of lower-index goods.

In this instance, agent $n$'s MMS is exactly $u_n(\{g_n, \ldots, g_m\}) = m - n + 1 \geq 1$, as obtained by the~partition $\{\{g_1\}, \{g_2\}, \ldots, \{g_{n-1}\}, \{g_n, g_{n+1}, \ldots, g_{m}\}\}$. Moreover, in each round, agent $n$ gets a good with utility at most $1$, giving a total utility of at most $\lceil m/n \rceil$.
It follows that agent~$n$ gets at most $\lceil m/n \rceil/(m - n + 1)$ times her MMS.

Therefore, for one of the instances $\mathcal{I}_1$ and $\mathcal{I}_2$, agent $n$ gets no more than \[
    \alpha_{\max} = \min\left\{\frac{\lfloor m/n \rfloor}{\lfloor m/n \rfloor n - n + 1}, \frac{\lceil m/n \rceil}{m-n+1}\right\}
\] times her MMS.

Next, we prove statement (b) using statement (c), which will be proven later.
Consider the picking sequence $\pi$ where the sequence in the second round onwards is the reverse of the first round---that is, $\pi = (1, 2, \ldots, n \mid n, n-1, \ldots, 1 \mid \cdots \mid n, n-1, \ldots, 1 \mid n, n-1, \ldots, a_m)$ where $a_m = \lceil m/n \rceil n - m + 1$.
Observe that $\pi$ is regular. 
Indeed, otherwise the second round exists and has an even number of turns, so agent~$n-1$ gets to pick in the second round, which means that $\pi$ cannot be irregular.
We now prove that $\pi$ satisfies (ii).
Take any $r \in \{2, \ldots, \lceil m/n \rceil\}$, and let $t_r$ be the index of agent $n$'s $r$-th pick.
Note that $t_r = (r-1)n+1$ and $r \leq \lceil m/n \rceil \leq \lfloor m/n \rfloor+1$.
Then,
\begin{align*}
    \frac{r-1}{t_r-n} &= \frac{r-1}{(r-1)n+1-n} \\
    &= \frac{1}{n-\frac{n-1}{r-1}} \\
    &\geq \frac{1}{n-\frac{n-1}{\lfloor m/n \rfloor}} \tag{since $r-1 \leq \lfloor m/n \rfloor$} \\
    &= \frac{\lfloor m/n \rfloor}{\lfloor m/n \rfloor n-n+1} \\
    &\geq \alpha_{\max}.
\end{align*}
This means that $t_r \leq n + (r-1)/\alpha_{\max}$. 
Hence, $\pi$ also satisfies condition (ii).
By statement (c), under picking sequence $\pi$, every agent is guaranteed to get at least $\alpha_{\max}$ times her MMS.

We now prove the backward direction of statement (c).
Take any picking sequence $\pi \in \recbalseq$ that satisfies both conditions (i) and~(ii).
In particular, $\pi$ is regular.
We shall use \cref{thm:mms_regular}(a) to prove the statement.
Let $\pi_n = (t_1, \ldots, t_R)$ be agent $n$'s picking sequence under $\pi$ and $t_{R+1} = m+1$.
By condition (ii), we have $R = \lceil m/n \rceil$ and $t_r \leq n+(r-1)/\alpha_{\max}$ for each $r \in \{2, \ldots, R\}$.
It follows that
\begin{align*}
    &\min_{r \in \{2, \ldots, R+1\}} \frac{r-1}{t_r-n} \\
    &= \min \left\{\min_{r \in \{2, \ldots, R\}} \frac{r-1}{t_r-n}, \frac{R}{t_{R+1}-n} \right\} \\
    &\geq \min\left\{\min_{r \in \{2, \ldots, R\}} \frac{r-1}{(r-1)/\alpha_{\max}}, \frac{R}{t_{R+1}-n} \right\} \tag{by the assumption on $t_r$} \\
    &= \min\left\{\alpha_{\max}, \frac{\lceil m/n \rceil}{m-n+1} \right\} \tag{since $R = \lceil m/n \rceil$ and $t_{R+1} = m+1$} \\
    &= \alpha_{\max}. \tag{by definition of $\alpha_{\max}$}
\end{align*}
Hence, by \cref{thm:mms_regular}(a), every agent is guaranteed to get at least $\alpha_{\max}$ times her MMS.

It remains to prove the forward direction of statement (c).
Let $\pi \in \recbalseq$ be such that every agent is guaranteed to get at least $\alpha_{\max}$ times her MMS.
Let $\pi_n = (t_1, \ldots, t_R)$ be agent $n$'s picking sequence under $\pi$ and $t_{R+1} = m+1$.

Assume for the sake of contradiction that $\pi$ is irregular. 
Then, the second round exists and is incomplete. 
This means that $\lfloor m/n \rfloor = 1$ and $\lceil m/n \rceil = 2$. 
Hence, $\alpha_{\max} = \min\{1,2/(m-n+1)\} = 2/(m-n+1).$ 
On the other hand, by \cref{thm:mms_irregular}(b), there exists an instance where some agent with a positive MMS gets no more than $2/(m-n+2)$ times her MMS.
Since $2/(m-n+2) < 2/(m-n+1) = \alpha_{\max}$, this contradicts the assumption that every agent gets at least $\alpha_{\max}$ times her MMS.
It follows that $\pi$ must be regular.

We can now prove, by contradiction, that agent $n$ must pick in the last round $r = \lceil m/n \rceil$. Suppose otherwise. Since $\pi$ is recursively balanced, the last round where agent $n$ gets to pick is $R = \lfloor m/n \rfloor \leq \lceil m/n \rceil - 1$; that is, $m$ is not divisible by $n$.
Then, we have
\begin{align*}
    &\min_{r \in \{2, \ldots, R+1\}} \frac{r-1}{t_r-n} \\
    &\quad\leq \frac{R}{t_{R+1}-n} \\
    &\quad= \frac{\lfloor m/n \rfloor}{m-n+1} \tag{since $R = \lfloor m/n \rfloor$ and $t_{R+1} = m+1$} \\
    &\quad< \min\left\{\frac{\lfloor m/n \rfloor}{\lfloor m/n \rfloor n - n + 1}, \frac{\lceil m/n \rceil}{m-n+1}\right\} \tag{since $m > \lfloor m/n \rfloor n$ and $\lfloor m/n \rfloor < \lceil m/n \rceil$} \\
    &\quad= \alpha_{\max}.
\end{align*}
By \cref{lem:mms_agent_ub}, this implies that agent $n$ is not always guaranteed to get at least $\alpha_{\max}$ times her MMS, a contradiction.
Hence, agent~$n$ must pick in every round.

Lastly, we prove, again by contradiction, that $t_r \leq n+(r-1)/\alpha_{\max}$ for all $r \in \{2, \ldots, \lceil m/n \rceil\}$.
Suppose that there is some $r \in \{2, \ldots, \lceil m/n \rceil\}$ such that $t_r > n+(r-1)/\alpha_{\max}$.
This means that $\alpha_{\max} > (r-1)/(t_r-n)$.
Like in the previous paragraph, \cref{lem:mms_agent_ub} implies that agent $n$ is not guaranteed to get $\alpha_{\max}$ times her MMS, contradicting our earlier assumption.
Hence, it must hold that $t_r \leq n+(r-1)/\alpha_{\max}$ for all $r \in \{2, \ldots, \lceil m/n \rceil\}$. Together with the fact that agent $n$ picks in round $\lceil m/n \rceil$, this implies that condition (ii) is satisfied.
\end{proof}

At the other extreme, we characterize the recursively balanced picking sequences with the worst MMS guarantee.

\begin{restatable}[Worst Picking Sequences]{theorem}{mmsWorst}
\label{thm:mms_worst}
Let $n \geq 2$ and $m \geq n$.
Denote 
\[
    \alpha_{\min} = \max\left\{\frac{1}{n}, \frac{1}{m-n+1}\right\}.
\]
Then, the following statements hold:
\begin{enumerate}[(a)]
    \item For every picking sequence $\pi \in \recbalseq$, every agent is always guaranteed to get at least $\alpha_{\min}$ times her MMS in the allocation obtained with~$\pi$.
    \item There exists a picking sequence $\pi \in \recbalseq$ such that in some instance~$\mathcal{I} \in \mathcal{I}_{n, m}$, some agent with a positive MMS gets no more than $\alpha_{\min}$ times her MMS in the allocation obtained with~$\pi$.
    \item For any picking sequence $\pi \in \recbalseq$, some agent with a positive MMS gets no more than $\alpha_{\min}$ times her MMS in the allocation obtained with~$\pi$ in some instance~$\mathcal{I} \in \mathcal{I}_{n, m}$ if and only if $\pi$ satisfies the following:
    \begin{enumerate}[(i)]
        \item $m \leq 2n-1$ and agent $n$ only picks once, or
        \item $m \geq 2n$ and there is some round $r \in \{2, \ldots,  \lceil m/n \rceil\}$ consisting of at least $n-1$ turns where agent $n$ does not pick in the first $n-1$ turns.
    \end{enumerate}
\end{enumerate}
\end{restatable}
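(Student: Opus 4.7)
The plan is to derive all three parts from the per-agent MMS bounds established in \cref{subsec:mms_agents}.

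For part (a), I would establish the uniform lower bound $\alpha_{\min}$ by treating each agent separately. For $i \in N \setminus \{n-1, n\}$, \cref{lem:mms_agent_1_to_n-2_lb} gives at least $\alpha_{\max}$, and a direct case check on the three intervals $m = n$, $n+1 \leq m \leq 2n-1$, and $m \geq 2n$ confirms $\alpha_{\max} \geq \alpha_{\min}$. For agent $n-1$, in the regular case (or whenever one of the conditions of \cref{lem:mms_agent_n-1_lb_regular} holds) she is guaranteed at least $\alpha_{\max} \geq \alpha_{\min}$, while in the irregular case $m-n$ is even, so \cref{lem:mms_agent_lb_small_m} yields $2/(m-n+2)$, which exceeds $1/(m-n+1) = \alpha_{\min}$ since $m > n$. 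The main work is for agent $n$: I would apply \cref{lem:mms_agent_lb} with $i = n$ and case-split on $m \leq 2n-1$ versus $m \geq 2n$. The recursive-balancedness bound $t_r \leq rn$ for $r \leq R$, together with a subcase on whether agent $n$ picks in the (possibly incomplete) last round and on the remainder $m - \lfloor m/n\rfloor n$, shows $\min_{r}(r-1)/(t_r - n) \geq \alpha_{\min}$ in both regimes.

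For part (c) I would prove both directions. The backward direction applies \cref{lem:mms_agent_ub} with $i = n$, and I would verify that the resulting upper bound collapses to $\alpha_{\min}$: under (i), agent $n$ has $R = 1$ and $t_2 = m+1$, giving $1/(m-n+1) = \alpha_{\min}$; under (ii), either a complete round $r$ in which agent $n$ avoids the first $n-1$ turns forces $t_r = rn$, yielding $(r-1)/((r-1)n) = 1/n$, or the final round of $n-1$ turns in which she does not pick forces $t_{R+1} = m + 1 = rn$, again giving $1/n$. The forward direction I would handle by contrapositive: assuming neither (i) nor (ii) holds, I would show that every agent strictly exceeds $\alpha_{\min}$ in every instance. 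The bounds from part (a) already give strict inequalities for agents $1, \ldots, n-1$ whenever $m > n$ (the case $m = n$ is excluded because (i) then holds trivially), so only agent $n$ requires work. The failure of (i) forces $R \geq 2$ when $m \leq 2n-1$, so both $1/(t_2 - n) \geq 1/(m-n) > \alpha_{\min}$ and $R/(m-n+1) \geq 2/(m-n+1) > \alpha_{\min}$. The failure of (ii) when $m \geq 2n$ gives $t_r \leq rn - 1$ in each relevant round and a remainder-based inequality on $t_{R+1}$, from which each term of $\min_r (r-1)/(t_r - n)$ strictly exceeds $1/n$.

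Part (b) follows immediately from the backward direction of (c) by exhibiting any sequence satisfying (i) or (ii); for instance, $\roundrobin$ works universally, as it places agent $n$ at position $n$ only when $m \leq 2n-1$ (so (i) holds) and at the end of every complete round when $m \geq 2n$ (so (ii) holds). I expect the hardest step to be the forward direction of (c) for agent $n$ in the $m \geq 2n$ regime: one must carefully verify strict inequality for the final-round term $R/(m+1-n)$ across all forms of the last round---distinguishing $m$ divisible by $n$, remainder equal to $n-1$, and remainder strictly less than $n-1$---because the boundary case where $m+1$ is a multiple of $n$ is precisely where the negation of (ii) must rule out equality with $1/n$.
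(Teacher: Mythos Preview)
Your overall architecture matches the paper's: part~(a) via the per-agent lower bounds, part~(b) via round-robin and~(c), and part~(c) via \cref{lem:mms_agent_ub} for the backward direction and a contrapositive with strict inequalities for the forward direction. The backward direction of~(c) and part~(b) are essentially identical to the paper's treatment.

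There is, however, a minor but genuine gap in your handling of agent $n-1$ in part~(a). You write that ``in the regular case (or whenever one of the conditions of \cref{lem:mms_agent_n-1_lb_regular} holds)'' she is guaranteed $\alpha_{\max}$, but regularity does \emph{not} imply that one of those three conditions holds. A regular sequence can have an incomplete second round with an even number of turns in which agent $n-1$ does not pick, provided agent $n$ picks after the first $(m-n)/2$ turns of that round (e.g.\ $n=4$, $m=6$, $\pi=(1,2,3,4\mid 1,4)$). For such sequences \cref{lem:mms_agent_n-1_lb_regular} does not apply, and your case split regular/irregular leaves this uncovered. The fix is easy---\cref{lem:mms_agent_lb_small_m} applied directly to $i=n-1$ already gives $1/\lfloor(m-n+2)/2\rfloor \ge 1/(m-n+1)=\alpha_{\min}$---but as written your argument is incomplete.

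More broadly, the paper sidesteps this entire per-agent decomposition: for part~(a) it simply applies \cref{lem:mms_agent_lb_small_m} (when $m\le 2n-1$) or \cref{lem:mms_agent_lb_large_m} (when $m\ge 2n$) uniformly to \emph{every} agent $i\in N$, since both lemmas hold for arbitrary $i$. This yields $\alpha_{\min}$ in two lines with no case split on the agent, no appeal to $\alpha_{\max}$, and no regular/irregular distinction. Similarly, in the forward direction of~(c) for $m\ge 2n$, rather than treating agents $1,\dots,n-1$ separately from agent $n$, the paper observes that $m\ge 2n$ forces $\pi$ to be regular and then invokes \cref{thm:mms_regular}(a) once to cover all agents simultaneously. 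Your route works (modulo the gap above), but these two simplifications are worth adopting.
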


When $m \ge 2n-1$, we have $\alpha_{\min} = 1/n$.
In this case, \Cref{thm:mms_worst}(a) follows from Proposition~3.6 of \citet{AmanatidisBiMa18}, which shows that any EF1 allocation gives each agent at least $1/n$ times her MMS.

Note that round-robin always satisfies condition (i) or (ii) of \cref{thm:mms_worst}(c), and therefore has the worst MMS guarantee. 
More interestingly, balanced alternation also satisfies condition (ii) when $m \geq 3n-1$, since agent $n$ does not pick in the first $n-1$ turns of the third round.

\begin{proof}[Proof of \cref{thm:mms_worst}]
Note that $m-n+1 \leq n$ if and only if $m \leq 2n-1$. 
Therefore, we have
\begin{align}
\label{eq:alpha_min}
    \alpha_{\min} = \begin{cases}
        1/(m-n+1) & \text{if $m \leq 2n-1$}; \\
        1/n & \text{if $m \geq 2n$.}
    \end{cases}
\end{align}

We first prove statement (a).
Consider the case where $m \leq 2n-1$.
By \cref{lem:mms_agent_lb_small_m}, each agent $i \in N$ is guaranteed to get at least $1/\lfloor (m+1-i)/(n+1-i) \rfloor$ times her MMS.
We have
\begin{align*}
    \frac{1}{\left\lfloor \frac{m+1-i}{n+1-i} \right\rfloor} &\geq \frac{n+1-i}{m+1-i} \\
    &= \frac{1}{\frac{m-n}{n+1-i} + 1} \\
    &\geq \frac{1}{m-n + 1} \tag{since $i \leq n$} \\
    &= \alpha_{\min}. \tag{from \eqref{eq:alpha_min} and $m \leq 2n-1$}
\end{align*}
Hence, every agent $i \in N$ is guaranteed to get at least $\alpha_{\min}$ times her MMS.

Next, consider the case where $m \geq 2n$.
By \cref{lem:mms_agent_lb_large_m}, each agent $i \in N$ is guaranteed to get at least $(n+1-i)/(2n-i)$ times her MMS.
We have
\begin{align*}
    \frac{n+1-i}{2n-i} &= \frac{1}{\frac{n-1}{n+1-i}+1} \\
    &\geq \frac{1}{n} \tag{since $i \leq n$} \\
    &= \alpha_{\min}. \tag{from \eqref{eq:alpha_min} and $m \geq 2n$}
\end{align*}
Hence, every agent $i \in N$ is guaranteed to get at least $\alpha_{\min}$ times her MMS.

We prove statement (b) using statement (c), which will be proven later.
Consider the round-robin picking sequence $\roundrobin$.
If $m \leq 2n-1$, agent $n$ only picks once, while if $m \geq 2n$, agent $n$ picks in the $n$-th turn of the second round.
In both cases, $\roundrobin$ satisfies either condition~(i) or condition (ii) of statement (c).
Hence, it follows from statement (c) that under picking sequence $\roundrobin$, there exists an instance where some agent with a positive MMS gets no more than $\alpha_{\min}$ times her MMS.

We now prove the backward direction of statement (c).
Take any picking sequence $\pi \in \recbalseq$ satisfying either condition (i) or (ii).
Let agent $n$'s picking sequence be $\pi_n = (t_1, \ldots, t_R)$ and define $t_{R+1} = m+1$.

First, suppose that $m \leq 2n-1$. Then, condition (i) must be satisfied, and agent $n$ only picks once---that is, $\pi_n = (t_1) = (n)$.
In this case, we have
\begin{align*}
    \min_{r \in \{2, \ldots, R+1\}} \frac{r-1}{t_r-n} = \frac{2-1}{t_2-n} = \frac{1}{m-n+1} \leq \alpha_{\min}.
\end{align*}
Hence, by \cref{lem:mms_agent_ub}, there exists an instance where agent~$n$ has a positive MMS and gets no more than $\alpha_{\min}$ times her MMS.

Next, consider the case where $m \geq 2n$. 
Then, condition~(ii) must be satisfied, and there is some round $s \in \{2, \ldots, \lceil m/n \rceil\}$ consisting of at least $n-1$ turns where agent~$n$ does not pick in the first $n-1$ turns.
If agent~$n$ picks in round~$s$, then $t_s = sn$ and $s \in \{2, \ldots, R\}$.
Otherwise, agent~$n$ does not pick in round $s$. Then, $s = R+1$ and $m \leq sn-1$. Since round $s$ consists of at least $n-1$ turns, $m = sn-1 = (R+1)n-1$. Thus, $t_{R+1} = m+1 = (R+1)n$.
In either case, we have $t_s = sn$ and $s \in \{2, \ldots, R+1\}$.
It follows that
\begin{align*}
    \min_{r \in \{2, \ldots, R+1\}} \frac{r-1}{t_r-n} &\leq \frac{s-1}{t_s-n}
    = \frac{s-1}{sn-n} 
    = \frac{1}{n}
    \leq \alpha_{\min}.
\end{align*}
Again, by \cref{lem:mms_agent_ub}, there exists an instance where agent~$n$ has a positive MMS and gets no more than $\alpha_{\min}$ times her MMS.

It remains to prove the forward direction of statement (c).
We proceed by proving the contrapositive.
Take any picking sequence $\pi \in \recbalseq$ that satisfies neither condition (i) nor (ii).
Let agent $n$'s picking sequence be $\pi_n = (t_1, \ldots, t_R)$ and define $t_{R+1} = m+1$.

We first consider the case where $m \leq 2n-1$. Since condition (i) is not satisfied, agent $n$ must get to pick in the second round.
That is, $R = 2$ and $t_2 \leq m$.
Furthermore, $m \geq n+1$.
We have
\begin{align*}
    \min_{r \in \{2, \ldots, R+1\}} \frac{r-1}{t_r-n} &= \min \left\{\frac{1}{t_2-n}, \frac{2}{t_3-n}\right\} \\
    &\geq \min \left\{\frac{1}{m-n}, \frac{2}{m+1-n}\right\} \tag{since $t_2 \leq m$ and $t_3 = m+1$} \\
    &\geq \min \left\{\frac{1}{m-n}, \frac{2}{2(m-n)}\right\}  \tag{since $1 \leq m-n$} \\
    &= \frac{1}{m-n} \\
    &> \frac{1}{m-n+1} \\
    &= \alpha_{\min}. \tag{from \eqref{eq:alpha_min} and $m \leq 2n-1$}
\end{align*}
Hence, by \cref{lem:mms_agent_lb}, agent $n$ is guaranteed to get strictly more than $\alpha_{\min}$ times her MMS whenever her MMS is positive.

We now show that every agent $i \in N \setminus \{n\}$ is also guaranteed to get strictly more than $\alpha_{\min}$ times her MMS whenever her MMS is positive.
By \cref{lem:mms_agent_lb_small_m}, such an agent $i \in N \setminus \{n\}$ is guaranteed to get at least $1/\lfloor (m+1-i)/(n+1-i) \rfloor$ times her MMS. 
Note that
\begin{align*}
    \frac{1}{\left\lfloor \frac{m+1-i}{n+1-i} \right\rfloor} &\geq \frac{n+1-i}{m+1-i} \\
    &= \frac{1}{\frac{m-n}{n+1-i}+1} \\
    &> \frac{1}{m-n+1} \tag{since $i < n$} \\
    &= \alpha_{\min}. \tag{from \eqref{eq:alpha_min} and $m \leq 2n-1$}
\end{align*}
Hence, agent~$i$ gets strictly more than $\alpha_{\min}$ times her MMS whenever her MMS is positive.

Lastly, we consider the case where $m \geq 2n$.
Let $r \in \{2, \ldots, R\}$.
If $t_r \geq rn$, then agent $n$ picks in the $n$-th turn of round $r$, which implies that condition (ii) is satisfied, a contradiction.
Hence, $t_r \leq rn-1$.
Furthermore, if $t_{R+1} \geq (R+1)n$, then $m = (R+1)n-1$, so round $R+1$ contains $n-1$ turns but agent $n$ does not pick in round $R+1$. Again, this implies that condition (ii) is satisfied, a contradiction.
Hence, for every $r \in \{2, \ldots, R+1\}$, we have $t_r \leq rn-1$.
Furthermore, since $m \geq 2n$, it holds that $\pi$ is regular.
By \cref{thm:mms_regular}, every agent is always guaranteed to get at least $\alpha$ times her MMS, where
\begin{align*}
    \alpha &= \min_{r \in \{2, \ldots, R+1\}} \frac{r-1}{t_r-n} \\
    &\geq \min_{r \in \{2, \ldots, R+1\}} \frac{r-1}{rn-1-n} \tag{since $t_r \leq rn-1$} \\
    &> \min_{r \in \{2, \ldots, R+1\}} \frac{r-1}{(r-1)n} \\
    &= 1/n \\
    &= \alpha_{\min}. \tag{from \eqref{eq:alpha_min} and $m \geq 2n$}
\end{align*}
Hence, every agent with a positive MMS gets strictly more than $\alpha_{\min}$ times her MMS.
\end{proof}

\Cref{thm:mms_best,thm:mms_worst} yield a simple classification of the MMS guarantees of (recursively balanced) picking sequences for two agents.
Specifically, for each $m \geq 3$, there is a unique picking sequence with the best MMS guarantee, which lets agent~$2$ picks first in every round except the first round; all other picking sequences have the worst MMS guarantee.
This is elaborated in \Cref{cor:mms_twoagents}.

\begin{restatable}{corollary}{corMmsTwoagents}
\label{cor:mms_twoagents}
Let $n = 2$, $m \geq 3$, and $\pi \in \recbalseq$.
Then, the following statements hold:
\begin{enumerate}[(a)]
    \item If agent $2$ picks first in every round from the second round onwards in $\pi$, then each agent is always guaranteed to get at least $1/(2 - 1/\lfloor m/2 \rfloor)$ times her MMS.
    Moreover, this bound is tight.
    \item Otherwise, each agent is always guaranteed to get at least $1/2$ times her MMS.
    Moreover, this bound is tight.
\end{enumerate}
\end{restatable}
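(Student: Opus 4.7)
The plan is to derive both statements by specializing \cref{thm:mms_best,thm:mms_worst} to $n = 2$. First I would simplify $\alpha_{\max}$ and $\alpha_{\min}$ in this regime: when $n = 2$ and $m \geq 3$, $\alpha_{\min} = \max\{1/2, 1/(m-1)\} = 1/2$. For $\alpha_{\max}$, a short case split on the parity of $m$ should establish the unified identity $\alpha_{\max} = 1/(2 - 1/\lfloor m/2 \rfloor)$: the even case is immediate since $\lfloor m/2 \rfloor = \lceil m/2 \rceil = m/2$, while the odd case requires verifying that $(m-1)/(2(m-2)) \leq (m+1)/(2(m-1))$, which reduces to $(m-1)^2 \leq (m+1)(m-2)$ and holds for all $m \geq 3$.

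For statement (a), the hypothesis uniquely identifies one picking sequence $\pi^\star = (1, 2 \mid 2, 1 \mid 2, 1 \mid \cdots)$, with last round $(2)$ if $m$ is odd and $(2, 1)$ if $m$ is even. Agent $2$'s pick indices in $\pi^\star$ are $t_1 = 2$ and $t_r = 2r - 1$ for $r \geq 2$, and $\pi^\star$ is regular (either $m = 3$ makes the second round a single turn, or $m \geq 2n$). To invoke \cref{thm:mms_best}(c), I then need to verify $t_r \leq 2 + (r-1)/\alpha_{\max}$ for each $r \in \{2, \ldots, \lceil m/2 \rceil\}$, which rearranges to $\alpha_{\max} \leq (r-1)/(2r-3)$. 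Since the right-hand side decreases in $r$, checking the largest relevant $r = \lceil m/2 \rceil$ suffices, and a direct computation confirms this (with equality in the odd case). Statement (a) then follows: the lower bound from \cref{thm:mms_best}(c) and tightness from \cref{thm:mms_best}(a).

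For statement (b), the strategy is to verify that any $\pi \in \recbalseq$ not satisfying the hypothesis of (a) meets one of the conditions in \cref{thm:mms_worst}(c). If $m = 3$, the only other element of $\recbalseq$ is $\pi = (1, 2 \mid 1)$, in which agent $2$ picks exactly once, giving condition (i) since $m = 2n - 1$. If $m \geq 4$, then by assumption there is some round $r \in \{2, \ldots, \lceil m/2 \rceil\}$ where agent $2$ does not pick first; because $n - 1 = 1$, this round trivially has at least $n - 1$ turns, and agent $2$ fails to pick in the first $n - 1 = 1$ turn, giving condition (ii). The universal lower bound $1/2$ then comes from \cref{thm:mms_worst}(a), and tightness from \cref{thm:mms_worst}(c).

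The main obstacle is not conceptual but computational: simplifying $\alpha_{\max}$ to the symmetric form $1/(2 - 1/\lfloor m/2 \rfloor)$ and confirming condition (ii) of \cref{thm:mms_best}(c) for $\pi^\star$ in both parity cases. Once this bookkeeping is completed, the corollary drops out directly from the already established characterizations of the best and worst picking sequences.
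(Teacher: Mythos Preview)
Your proposal is correct and follows essentially the same route as the paper: both arguments specialize \cref{thm:mms_best,thm:mms_worst} to $n=2$, first simplifying $\alpha_{\max}$ to $1/(2-1/\lfloor m/2\rfloor)$ and $\alpha_{\min}$ to $1/2$, and then verifying that $\pi^\star$ meets the conditions of \cref{thm:mms_best}(c) while every other $\pi$ meets those of \cref{thm:mms_worst}(c). The only cosmetic differences are that the paper proves the inequality $\lfloor m/2\rfloor/(2\lfloor m/2\rfloor-1)\le\lceil m/2\rceil/(m-1)$ without an explicit parity split and checks condition~(ii) of \cref{thm:mms_best}(c) for all $r$ via the bound $r-1\le\lfloor m/2\rfloor$ rather than by your monotonicity reduction to the largest $r$.
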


\begin{proof}
Suppose that agent $2$ picks first in the second round onwards of $\pi$.
For statement~(a), we begin by showing that
\begin{equation}
\label{eq:mms_twoagents}
    \frac{\lfloor m/2 \rfloor}{2\lfloor m/2 \rfloor - 1} \leq \frac{\lceil m/2 \rceil}{m-1}.
\end{equation}
When $m$ is even, both sides are equal. 
Assume now that $m$ is odd. 
In this case, $\lceil m/2 \rceil - \lfloor m/2 \rfloor = m - 2\lfloor m/2 \rfloor = 1$, so we have
\begin{align*}
    \frac{\lceil m/2 \rceil}{m-1} &= \frac{\lfloor m/2 \rfloor + 1}{2\lfloor m/2 \rfloor} \\
    &= \frac{1}{2 - \frac{2}{\lfloor m/2 \rfloor + 1}} \\
    &\geq \frac{1}{2 - \frac{2}{2\lfloor m/2 \rfloor}} \tag{since $\lfloor m/2 \rfloor \geq 1$} \\
    &= \frac{\lfloor m/2 \rfloor}{2\lfloor m/2 \rfloor - 1}.
\end{align*}
Hence, $\alpha_{\max}$ in \Cref{thm:mms_best} satisfies
\begin{align*}
    \alpha_{\max} &= \min\left\{\frac{\lfloor m/n \rfloor}{\lfloor m/n \rfloor n - n + 1}, \frac{\lceil m/n \rceil}{m-n+1}\right\} \\
    &= \frac{\lfloor m/2 \rfloor}{2\lfloor m/2 \rfloor-1} \tag{by \eqref{eq:mms_twoagents}} \\
    &= \frac{1}{2 - \frac{1}{\lfloor m/2 \rfloor}}.
\end{align*}
Now, note that $\pi$ is regular---the second round cannot be incomplete and have an even number of turns simultaneously.
Furthermore, for every round $r \in \{2, \ldots, \lceil m/n \rceil\}$, agent $2$ picks first, so her index is $t_r = 2(r-1)+1$. 
We have
\begin{align*}
    \frac{r-1}{t_r-2} &= \frac{r-1}{2(r-1)-1} \\
    &= \frac{1}{2-\frac{1}{r-1}} \\
    &\geq \frac{1}{2-\frac{1}{\lfloor m/2 \rfloor}} \tag{since $r \leq \lceil m/n \rceil \leq \lfloor m/2 \rfloor+1$} \\
    &= \alpha_{\max}.
\end{align*}
Hence, $t_r \leq 2 + (r-1)/\alpha_{\max}$.
This means that $\pi$ satisfies both conditions (i) and (ii) of \cref{thm:mms_best}(c).
Therefore, each agent is guaranteed to get $\alpha_{\max} = 1/(2-1/\lfloor m/2 \rfloor)$ times her MMS.
The tightness of this guarantee follows from \cref{thm:mms_best}(a).

Otherwise, suppose that agent $2$ does not pick first in some round after the first round.
The MMS guarantee in statement~(b) follows from \cref{thm:mms_worst}(a)---note that $m \geq 3$, and so $\alpha_{\min} = \max\{1/n, 1/(m-n+1)\} = 1/2$. 
For the tightness of this guarantee, observe that $\pi$ necessarily satisfies condition (i) or (ii) of \cref{thm:mms_worst}(c).
\end{proof}

When there are more than two agents, however, the picture is not as simple.
In particular, there is more than one picking sequence with the best MMS guarantee, and there are picking sequences with neither the best nor the worst MMS guarantee.
To illustrate these points, we consider the case with $3$ agents and $7$ goods in the following example.

\begin{example}
Let $n = 3$ and $m = 7$.

First, we find all picking sequences in $\mathcal{R}_{3, 7}$ with the best MMS guarantee.
By \cref{thm:mms_best}, $\alpha_{\max} = \min\{2/4, 3/5\} = 1/2$.
In order to guarantee that each agent always gets at least $\alpha_{\max} = 1/2$ times her MMS, a picking sequence must satisfy conditions (i) and (ii) of \cref{thm:mms_best}(c).
Condition~(i) is always satisfied since the second round is complete.
Condition (ii) is satisfied if and only if agent~$3$ gets to pick three times and the indices of agent~$3$'s second and third picks are at most $3 + 1/\alpha_{\max} = 5$ and $3 + 2/\alpha_{\max} = 7$ respectively.
We can therefore list all satisfying picking sequences as follows:
\[\begin{array}{ll}
    (1, 2, 3 \mid \mathbf{3}, 1, 2 \mid \mathbf{3}); & (1, 2, 3 \mid 1, \mathbf{3}, 2 \mid \mathbf{3}); \\
    (1, 2, 3 \mid \mathbf{3}, 2, 1 \mid \mathbf{3}); & (1, 2, 3 \mid 2, \mathbf{3}, 1 \mid \mathbf{3}).
\end{array}\]

Next, we find all picking sequences in $\mathcal{R}_{3, 7}$ with the worst MMS guarantee.
By \cref{thm:mms_worst}, $\alpha_{\min} = \max\{1/3, 1/5\} = 1/3$.
Since $m \geq 2n$, these picking sequences must satisfy condition (ii) of \cref{thm:mms_worst}(c), that is, agent $3$ must pick last in the second round.
All picking sequences satisfying this condition are as follows:
\[\begin{array}{lll}
    (1, 2, 3 \mid 1, 2, \mathbf{3} \mid 1); & (1, 2, 3 \mid 2, 1, \mathbf{3} \mid 1); & (1, 2, 3 \mid 1, 2, \mathbf{3} \mid 2); \\
    (1, 2, 3 \mid 2, 1, \mathbf{3} \mid 2); & (1, 2, 3 \mid 1, 2, \mathbf{3} \mid 3); & (1, 2, 3 \mid 2, 1, \mathbf{3} \mid 3).
\end{array}\]

In total, there are four picking sequences with the best MMS guarantee of $\alpha_{\max} = 1/2$, and six sequences with the worst MMS guarantee of $\alpha_{\min} = 1/3$. 
Since the number of picking sequences in $\mathcal{R}_{3, 7}$ is $3! \cdot 3 = 18$, there are eight picking sequences with neither the best nor the worst MMS guarantee, e.g., $(1, 2, 3 \mid 1, 3, 2 \mid 1)$.
\end{example}

\section{Conclusion}
\label{sec:conclusion}

In this paper, we have compared the fairness of recursively balanced picking sequences, all of which are known to guarantee envy-freeness up to one good (EF1).
We used two important measures, egalitarian welfare and approximate maximin share (MMS), and showed that they yield highly different results. 
On the one hand, all recursively balanced picking sequences fare equally well when evaluated using worst-case egalitarian welfare relative to other picking sequences. 
On the other hand, various recursively balanced picking sequences offer differing MMS guarantees, with the round-robin and balanced alternation sequences being among the worst. 
Interestingly, the sequences with the best MMS guarantee include those in which the agent who picks last in the first round always picks first in every subsequent round.
In light of their theoretical guarantees, we believe that these sequences merit consideration for practical adoption.

We conclude by proposing two directions for future research.
Firstly, it would be interesting to see whether our worst-case results continue to hold in the average case.
In this vein, \citet{BouveretLa11} computed the optimal picking sequences with respect to egalitarian (and utilitarian) welfare; however, their results are empirical and limited to small numbers of agents and goods.\footnote{Moreover, our measure is somewhat different, as we consider the egalitarian \emph{price} (see \Cref{sec:ew}).}
Note that average-case results rely on assumptions on the distributions of agents' utilities, which may vary across different applications.
Secondly, one could extend our analyses to picking sequences that are not recursively balanced.
While such picking sequences do not provide the EF1 guarantee, they may nevertheless perform well according to other measures.

\balance



\begin{acks}
This work was partially supported by the Singapore Ministry of Education under grant number MOE-T2EP20221-0001 and by an NUS Start-up Grant.
We thank the anonymous reviewers for their valuable comments.
\end{acks}



\bibliographystyle{ACM-Reference-Format} 
\bibliography{main-arxiv}


\clearpage
\appendix

\end{document}